\newif\ifproceedings
\proceedingsfalse
\documentclass[11pt]{llncs}
\usepackage{epsfig,amsmath,amssymb,verbatim}
\usepackage{a4wide,epsfig,amsmath,verbatim}
\usepackage{subfigure}
\usepackage{algorithm}
\usepackage{algorithmic}

\newtheorem{observation}{Observation}


\def\def\IPEfile{#}\input{#}1{\def\IPEfile{#1}\input{#1}}


\newcommand{\Reals}{{\mathbb{R}}}            
\newcommand{\eps}{\varepsilon}               


\def\C{{\cal C}}
\def\D{{\cal D}}

\def\O{{\cal O}}
\def\P{{\cal P}}

\def\S{{\cal S}}
\def\T{{\cal T}}

\begin{document}
\title{Quickest Path Queries on Transportation Network}
\author{Radwa El Shawi\inst{1} \and Joachim Gudmundsson\inst{1} \and
Christos Levcopoulos\inst{2}}

\institute{ NICTA\thanks{NICTA is funded by the Australian
Government as represented by the Department of Broadband,
Communications and the Digital Economy and the Australian Research
Council through the ICT Centre of Excellence program.}, Sydney,
Australia.
\email{\{radwa.elshawi,joachim.gudmundsson\}@nicta.com.au} \and
Department of Computer Science, Lund University.
\email{christos@cs.lth.se} } \maketitle

\begin{abstract}
This paper considers the problem of finding a quickest path
between two points in the Euclidean plane in the presence
of a transportation network. A transportation network consists
of a planar network where each road (edge) has an individual
speed. A traveller may enter and exit the network at any point
on the roads. Along any road the traveller moves with a fixed
speed depending on the road, and outside the network the
traveller moves at unit speed in any direction.

We give an exact algorithm for the basic version of the problem:
given a transportation network of total complexity $n$ in the
Euclidean plane, a source point $s$ and a destination point $t$,
find a quickest path between $s$ and $t$. We also show how the
transportation network can be preprocessed in time $O(n^2 \log n)$
into a data structure of size $O(n^2)$ such that a
$(1+\eps)$-approximate quickest path cost queries between any two
points in the plane can be answered in time $O(1/\eps^4 \log n)$.
\end{abstract}


\section{Introduction} \label{sec:Intro}

Transportation networks are a natural part of our infrastructure.
We use bus or train in our daily commute, and often walk to
connect between networks or to our final destination.

A transportation network consists of a set of $n$ non-intersecting roads, where each road has a speed. Thus a transportation network is usually modelled as a plane graph $\T(S,\C)$ in the Euclidean plane (or some other metric) whose vertices $S$ are nodes and whose edges $\C$ are roads. Furthermore,  each edge has a weight $\alpha\in (0,1]$ assigned to it. One can access or leave a road through any point on the road. In the presence of a transportation network, the distance between two points is defined to be the shortest elapsed time among all possible paths joining the two points
using the roads of the network. The induced distance, called $d_{\T}$, is called a transportation distance.

Using these notations the problem at hand is as follows:
\begin{figure} [tbh]
\begin{center}
\includegraphics[width=10cm]{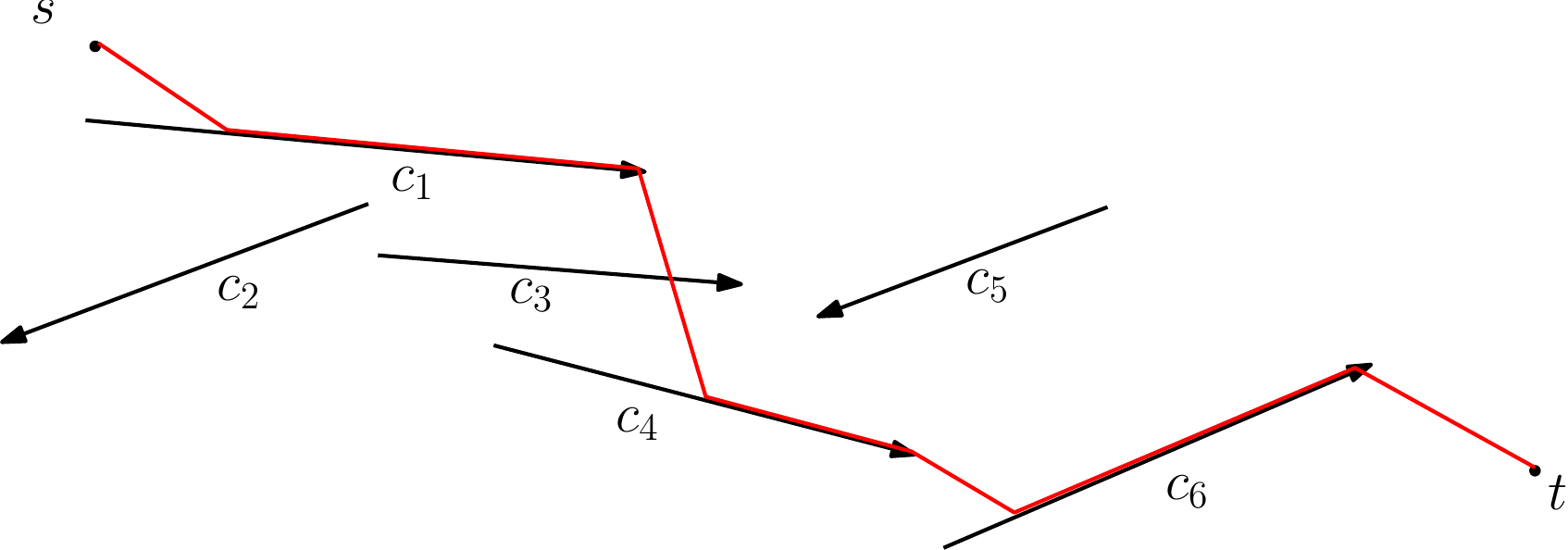}
\caption{Illustrating a quickest path from a source point $s$ to a
destination point $t$.} \label{example}
\end{center}
\end{figure}

\begin{problem} \label{prob:basic_problem}
  Given two points $s$ and $t$ in $\Reals^2$ and a transportation network $\T(S,\C)$ in the Euclidean plane. The problem is to find a path with the smallest transportation distance from $s$ to $t$, as shown in Fig.~\ref{example}.
\end{problem}

Most of the previous research has focussed on shortest paths and Voroni diagrams. Abellanas et al.~\cite{ahiklmps-pptmi-01,ahiklmprs-vdsnh-03} started work in this area considering the Voronoi diagram of a point set and shortest paths given a horizontal highway under the $L_1$-metric and the Euclidean metric. Aichholzer et al.~\cite{aap-qpssc-04} introduced the city metric induced by the $L_1$-metric and a highway network that consists of a number of axis-parallel line segments. They gave an efficient algorithm for constructing the Voronoi diagram and a quickest path map for a set of points given the city metric. G\"{o}rke et al.~\cite{gsw-ccvdf-08} and Bae et al.~\cite{bkc-occvd-09} improved and generalised these results.

In the case when the edges can have arbitrary orientation and speed, Bae et al.~\cite{GR-06} presented algorithms that compute the Voronoi diagram and shortest paths. They gave an algorithm for Problem~\ref{prob:basic_problem} that uses $O(n^3)$ time and $O(n^2)$ space. This result was recently extended to more general metrics including asymmetric convex distance functions~\cite{SPGR-05}.

In this paper we improve on the results by Bae et al.~\cite{GR-06}
and give an $O(n^2 \log n)$ time algorithm using $O(n^2)$ space.
Furthermore, we introduce the (approximate) query version. That
is, a transportation network with $n$ roads in the Euclidean plane
can be preprocessed in $O(n^2 \log n)$ time into a data  structure
of size $O(n^2/\eps^2)$ such that given any two points $s$ and $t$
in the plane a $(1+\eps)$-approximate quickest path between $s$
and $t$ can be answered in $O(1/\eps^4 \cdot \log n)$ time. For
the query structure we assume that the minimum and maximum
weights, $\alpha_{\min}$ and $\alpha_{\max}$, are constants in the
interval $(0,1]$ independent of $n$, the exact bound is stated in
Theorem~\ref{thm:main_query_result}.

This paper is organized as follows. Next we prove three
fundamental properties about an optimal path among a set of roads.
Then, in Section~\ref{sec:Buildgraph}, we show how we can use
these properties to build a graph of size $O(n^2)$ that models the
transportation network, a source point $s$, a destination point
$t$ and, contains a quickest path between $s$ and $t$. In
Section~\ref{sec:query_version} we consider the query version of
the problem. That is, preprocess the input such that an
approximate quickest path query between two query  points $s$ and
$t$ can be answered efficiently. Finally, we conclude with some
remarks and open problems.


\section{Three Properties of an Optimal Path} \label{sec:Three
Properties of an Optimal Path} In this section we are going to
prove three important properties of an optimal path. The
properties will be used repeatedly in the construction of the
algorithms in Sections ~\ref{sec:Buildgraph} and
\ref{sec:query_version}. Consider an optimal path $\P$ and let
$\C_{\P}= \langle c_1, c_2, \ldots, c_k \rangle$ be the sequence
of roads that are encountered and followed in order along $\P$,
that is, the sequence of roads on which the path changes
direction. For example, in Fig. \ref{example} that sequence would
be $\langle c_1, c_4, c_6\rangle$ but not include $c_3$ since the
path does not follow $c_3$. For any path $\P_1$, let $wt(\P_1)$
denote the cost of the path $\P_1$. Note that a road can be
encountered, and followed, several times by a path. For each road
$c_i\in \C_{\P}$ let $s_i$ and $t_i$ be the first and last point
on $c_i$ encountered for each occasion by $\P$. Without loss of
generality, we assume that $t_{i+1}$ lies below $s_{i}$ when
studying two consecutive roads in $\C_{\P}$. We have:

\begin{property} \label{Property1}
  For any two consecutive roads $c_i$ and $c_{i+1}$ in $\C_{\P}$
  the subpath of $\P$ between $t_i$ and $s_{i+1}$ is the straight
  line segment $(t_i,s_{i+1})$.
\end{property}

The endpoints of a road $c_j=(u_j,v_j)$ are denoted \textit{start
point} ($u_j$) and \textit{end point} ($v_j$), as they appear
along the road direction.

Consider a segment $(t_i,s_{i+1})$ connecting two consecutive
roads $c_i$ and $c_{i+1}$ along $\C_{\P}$. Let $\phi_{i+1}$ denote
the angle $\angle (t_i,s_{i+1},u_{i+1})$, as illustrated in
Fig.~\ref{fig:Property2}(a).

\begin{property} \label{Property2}
If $s_{i+1}$ lies in the interior of $c_{i+1}$ then
$\phi_{i+1}=\arccos(\alpha_{i+1})$.
\end{property}
\begin{proof}
 For simplicity rotate $\C$ such that $c_{i+1}$ is horizontal and lies below $t_i$, as shown in Fig.~\ref{fig:Property2}(a).
 Let $r$ denote the orthogonal projection of $t_i$ onto the ray containing $c_{i+1}$ (not necessarily on $c_{i+1}$ and let $h=|t_ir|$.
 We have:
 $$
  |t_is_{i+1}|=\frac {h}{\sin \phi_{i+1}} \quad {\rm and} \quad
  |rs_{i+1}|=h\cdot \frac {\cos \phi_{i+1}}{\sin \phi_{i+1}}.
 $$
 Thus, the cost of the path from $t_i$ to $t_{i+1}$ along $\C_{\P}$ as a function of $\phi_{i+1}$ is:
 $$
  f(\phi_{i+1}) = |t_is_{i+1}|+ \alpha_{i+1}\cdot |s_{i+1}t_{i+1}| =
            \frac {h}{\sin \phi_{i+1}} + \alpha_{i+1} \cdot (|rt_{i+1}|- h\cdot \frac {\cos \phi_{i+1}}{\sin \phi_{i+1}}).
 $$
 Differentiating the above function with respect to $\phi_{i+1}$ gives:
 $$
 f'(\phi_{i+1}) = \frac {h(\cos \phi_{i+1}-\alpha_{i+1})}{\cos^2 \phi_{i+1}-1}.
 $$
 Setting $f'(\phi_{i+1})=0$ the resulting function gives that the minimum weight path between $t_i$ and $t_{i+1}$ along $\C(\P)$ is obtained when
 $$\phi_{i+1}=\arccos(\alpha_{i+1}).$$

%
\hfill $\square$ \end{proof}

\begin{property} \label{Property3}
 There exists an optimal path $\P'$ of total cost $wt(\P)$ with $\C_{\P'}=\C_{\P}$ that fulfills Properties~\ref{Property1}-\ref{Property2}
 such that for any two consecutive roads $c_i$ and $c_{i+1}$ in $\C_{\P'}$ the straight-line segment $(t_i,s_{i+1})$ of $\P'$ must have an endpoint
 at an endpoint of $c_i$ or $c_{i+1}$, respectively.
\end{property}
\ifproceedings
 \begin{proof} The proof can be found in Appendix~A. \hfill $\square$ \end{proof}
\else
\begin{proof}
Assume the opposite, i.e., $(t_i,s_{i+1})$ does not coincide with
any of the endpoints of $c_i$ or $c_{i+1}$. Consider the three
segment path from $s_i$ to $t_{i+1}$, that is, $(s_i,t_i)$,
$(t_i,s_{i+1})$ and $(s_{i+1},t_{i+1})$. The length of this path
is:
$$\alpha_i\cdot |s_i,t_i|+ |t_i,s_{i+1}|+\alpha_{i+1}\cdot |s_{i+1},t_{i+1}|.$$
According to Lemma~\ref{Property2} the orientation of
$(t_i,s_{i+1})$ is fixed, which implies that the weight of the
path is a linear function only depending on the position of $t_i$
(or $s_{i+1}$). Hence, moving $t_i$ in one direction will
monotonically increase the weight of the path until one of two
cases occur: (1) either $t_i$ or $s_{i+1}$ encounters an endpoint
of $c_i$ or $c_{i+1}$, or (2) $t_i=s_i$ or $s_{i+1}=t_{i+1}$. If
(1) then we have a contradiction since we assumed $(t_i,s_{i+1})$
did not coincide with any endpoint. And if (2) then we have a
contradiction since $\P$ must follow both $c_i$ and $c_{i+1}$ (again from the definition of $\C_{\P}$).
\hfill $\square$ \end{proof}
\fi

\begin{figure} [tbh]
\begin{center}
\includegraphics[width=13cm]{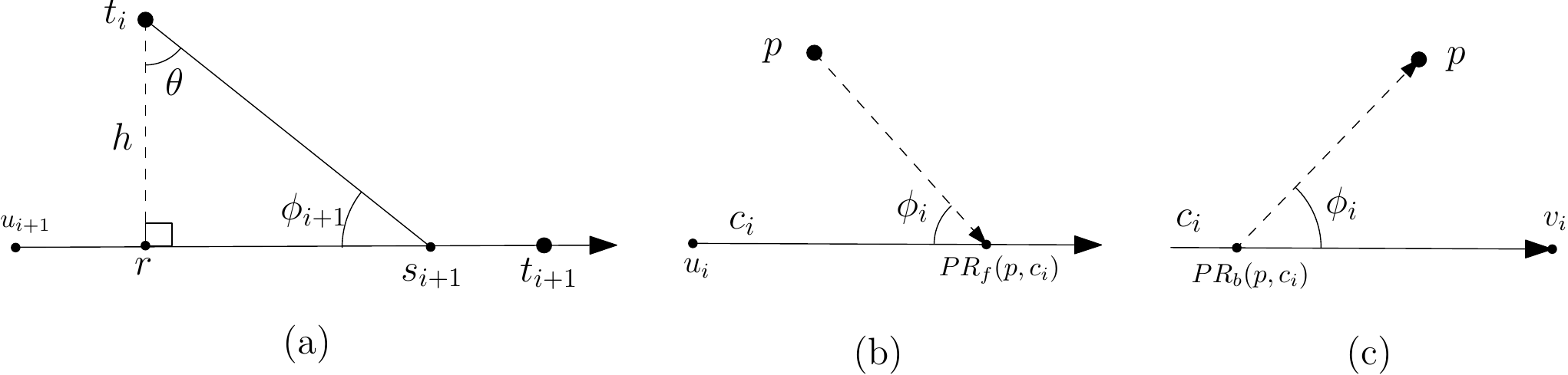}
\caption{(a) Illustrating Property~\ref{Property2}. (b) Defining $PR_f(s,c_i)$ and
(c) $PR_b(t,c_i)$.} \label{fig:Property2}
\end{center}
\end{figure}


\section{The basic case}\label{sec:Buildgraph}

In this section we consider Problem~\ref{prob:basic_problem}, that
is, as input we are given a source point $s$, a destination point
$t$ and a transportation network $\T(S,\C)$ and the aim is to find
a path with minimum transportation distance from $s$ to~$t$. Our
algorithm will construct a graph $G$ that models the set $\C$ of
roads and quickest paths between the roads. Using the three
properties shown in the previous section we will show that if a
shortest path in $G$ between $s$ and~$t$ has cost $W$ then an
optimal path between $s$ and $t$ has cost $W$. The optimal path
can then be found by running Dijkstra's
algorithm~\cite{d-ntpcg-59} on $G$.

Fix an optimal path $\P$ that fulfills
Properties~\ref{Property1}-\ref{Property3} (we know such a path
exists). If $\P$ follows~$c_i$ and $c_{i+1}$ then the path between
$c_i$ and $c_{i+1}$ (a) is a straight line segment, (b) the
segment $(t_i,s_{i+1})$ forms a fixed angle with $c_{i+1}$, and
(c) at least one of its endpoints coincides with an endpoint of
$c_i$ or $c_{i+1}$. These three properties suggest that $\P$ has a
very restricted structure which we will try to take advantage of.

Let $PR_f(p,c_i)$ be the projection of a point $p$ onto a road
$c_i$, if it exists, such that the angle
$\angle(p,PR_f(p,c_i),u_i)$ is $\phi_i$, as shown in
Fig.~\ref{fig:Property2}(b). Furthermore, let $PR_b(p,c_i)$ be the
projection of point $p$ on a road $c_i$, if it exists, such that
the angle $\angle(p,PR_b(p,c_i),v_i)$ is $\phi_i$, as shown in
Fig.~\ref{fig:Property2}(c).

Consider the graph $G(V,E)$ with vertex set $V$ and edge set $E$.
Initially $V$ and $E$ are empty. The graph $G$ is defined as
follows:

\begin{enumerate}
  \item Add $s$ and $t$ as vertices to $V$.
  \item Add the nodes in $S$ as vertices to $V$.
  \item For every road $c_i \in \C$ add the point $PR_f(s,c_i)$ (if it exists) as a vertex to $V$ and add the directed edge $(s, PR_f(s,c_i))$ (if it exists) with weight $|sPR_f(s,c_i)|$ to $E$, see Fig.~\ref{fig:BuildGraph}(a).
  \item For every road $c_i \in \C$ add the point $PR_b(t,c_i)$ (if it exists) as a vertex to $V$ and add the directed edge $(PR_b(s,c_i),t)$ (if it exists) with weight $|PR_b(s,c_i)t|$ to $E$.
  \item For every pair of roads $c_i, c_j \in \C$ add the following points (if they exist) as vertices to $V$: $PR_f(v_i, c_j)$, $PR_f(u_i, c_j)$, $PR_f(u_j,c_i)$ and $PR_f(v_j,c_i)$. Add the following four directed edges to $E$ (if their endpoints exist): $(v_i, PR_f(v_i,c_j))$, $(u_i, PR_f(u_i,c_j))$, $(v_j, PR_f(v_j,c_i))$ and $(u_j,PR_f(u_j,c_i))$. The weight of an edge is equal to the Euclidean distance between its endpoints, see Fig.~\ref{fig:BuildGraph}(b).
  \item For every pair of roads $c_i, c_j \in \C$ add the directed edges $(v_i,u_j)$ with weight $|v_iu_j|$ and $(v_j,u_i)$ with weight $|v_ju_i|$ to $E$.
  \item For every road $c_i$ consider the vertices of $V$ that correspond to points on $c_i$ in order from $u_i$ to $v_i$.
  For every consecutive pair of vertices $x_j,x_{j+1}$ along $c_i$ add a directed edge from $x_j$ to $x_{j+1}$ of weight
  $\alpha_i\cdot |x_jx_{j+1}|$, as shown in Fig.~\ref{fig:BuildGraph}(c).
\end{enumerate}

\begin{figure} [tbh]
\begin{center}
\includegraphics[width=13cm]{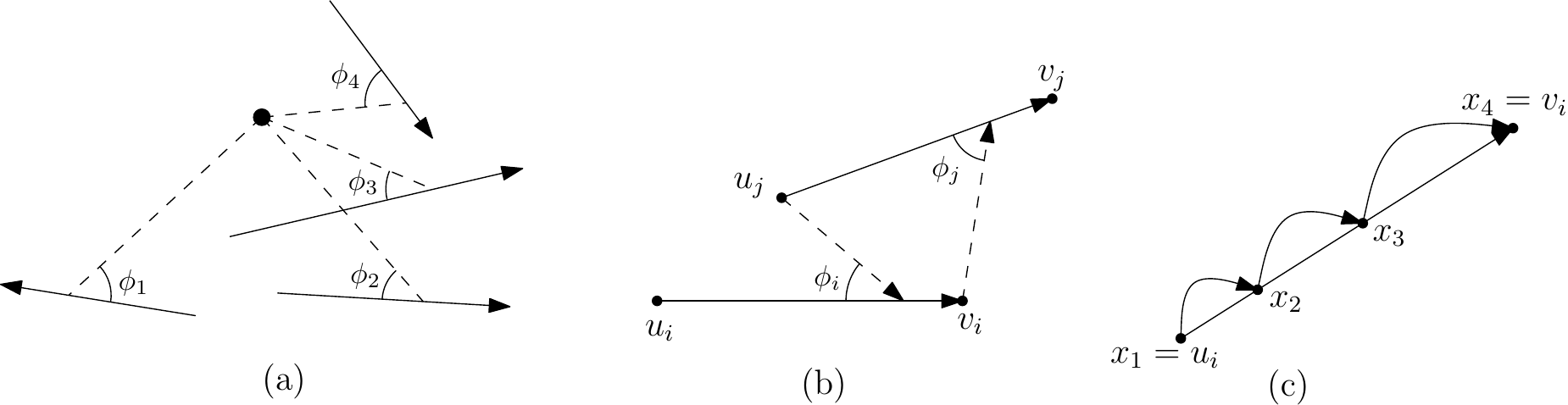}
\caption{Illustrating how the graph is built.}
\label{fig:BuildGraph}
\end{center}
\end{figure}

\begin{lemma} \label{lemma:building_graph}
  The graph $G$ contains $O(n^2)$ vertices and $O(n^2)$ edges and can be constructed in time $O(n^2 \log n)$.
\end{lemma}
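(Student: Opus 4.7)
The plan is to bound the contributions of each of the seven construction steps separately, then identify the step that dominates both the size and the running time.

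For the vertex count I would proceed as follows. Steps 1 and 2 contribute $2+|S|=O(n)$ vertices. Steps 3 and 4 each contribute at most one new vertex per road in $\C$, so $O(n)$ each. Step 5 is the dominant step: for each unordered pair $\{c_i,c_j\}$ it inserts at most four projection points, for a total of $O(n^2)$ vertices. Steps 6 and 7 add no new vertices. Summing yields $|V|=O(n^2)$. For the edge count, steps 3 and 4 contribute $O(n)$ edges each, steps 5 and 6 contribute $O(n^2)$ edges directly. For step 7 I would use the vertex bound: the vertices lying on a fixed road $c_i$ consist of its two endpoints $u_i,v_i$, at most one point from each of steps 3 and 4, and at most four projection points contributed by each other road in step 5, giving $O(n)$ vertices on $c_i$ and hence $O(n)$ consecutive-pair edges along $c_i$. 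Summed over all roads this is $O(n^2)$, so $|E|=O(n^2)$.

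For the running time, steps 1--4 take $O(n)$ after observing that each projection $PR_f(\cdot,c_i)$ or $PR_b(\cdot,c_i)$ can be computed in constant time from the road's endpoints, its weight $\alpha_i$, and the point being projected (using $\phi_i=\arccos(\alpha_i)$ from Property~\ref{Property2}). Steps 5 and 6 iterate over the $\binom{n}{2}$ road pairs and perform $O(1)$ work per pair, costing $O(n^2)$. The expensive operation is step 7, which requires, for each road $c_i$, sorting the $O(n)$ vertices lying on $c_i$ along its direction before linking consecutive pairs; this takes $O(n\log n)$ per road and $O(n^2\log n)$ in total.

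I expect the main obstacle to be the bound on the number of vertices deposited on a single road in step 5, since this is what ultimately makes step 7 cost $O(n^2\log n)$ rather than something larger. The argument is immediate from inspection of the construction (each other road $c_j$ contributes at most the four projection points $PR_f(u_i,c_j)$, $PR_f(v_i,c_j)$, $PR_f(u_j,c_i)$, $PR_f(v_j,c_i)$, only the latter two of which land on $c_i$), but it must be stated explicitly because the sorting step's cost is charged against it. Combining the three steps above gives the claimed bounds $|V|,|E|=O(n^2)$ and construction time $O(n^2\log n)$.
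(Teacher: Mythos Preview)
Your proposal is correct and follows essentially the same approach as the paper's own proof: bound the vertices and edges contributed by each construction step, observe that the pairwise projections in step~5 dominate the size at $O(n^2)$, and that the per-road sorting in step~7 dominates the time at $O(n^2\log n)$. Your argument is in fact more detailed than the paper's (which dispatches the whole thing in three sentences), but the structure and the key observations are identical.
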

\begin{proof}
  For every pair of roads we construct a constant number of vertices
  and edges that are added to $V$ and $E$, thus $O(n^2)$
  vertices and edges in total. For the first five steps of the
  construction the time to construct the vertices and edges is
  linear with respect to the size of the graph, since every edge
  and vertex can be constructed in constant time. In step~6 we need
  to sort $O(n)$ vertices along each road, thus $O(n^2 \log n)$ time
  in total.\hfill $\square$
\end{proof}

The following observation follows immediately from the construction of the graph and Properties~\ref{Property1}-\ref{Property3}.

\begin{observation} \label{obs:shortestpath}
  The shortest path between $s$ and $t$ in $G$ has cost $W$ if and only if the minimum transportation distance from $s$ to $t$ has cost $W$.
\end{observation}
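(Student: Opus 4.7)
The plan is to establish the biconditional by proving the two inequalities $\mathrm{SP}_G(s,t) \ge d_{\T}(s,t)$ and $\mathrm{SP}_G(s,t) \le d_{\T}(s,t)$ separately. The forward inequality is essentially a simulation check, while the reverse inequality is where the three properties of Section~\ref{sec:Three Properties of an Optimal Path} earn their keep.

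For the direction $\mathrm{SP}_G(s,t) \ge d_{\T}(s,t)$, I would verify edge by edge that every edge of $G$ corresponds to an actual sub-path in the plane of the same cost: edges added in steps~3--6 are free-space straight segments whose assigned weight equals their Euclidean length, and edges added in step~7 are sub-segments of a road $c_i$ weighted by $\alpha_i$. Concatenating these physical pieces along a graph path produces a continuous transportation path from $s$ to $t$ of the same total cost.

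For the reverse direction, I would start from an optimal transportation path $\P$ already normalised to satisfy Properties~\ref{Property1}--\ref{Property3} and show it can be traced through $G$. The path decomposes into four kinds of pieces: the initial segment $(s,s_1)$, each along-road segment $(s_i,t_i)$, each inter-road segment $(t_i,s_{i+1})$, and the final segment $(t_k,t)$. Property~\ref{Property2} applied at $s_1$ and $t_k$ pins $s_1 = PR_f(s,c_1)$ and $t_k = PR_b(t,c_k)$, giving the required edges from steps~3 and 4. Each along-road segment $(s_i,t_i)$ is realised as a chain through step~7, since its endpoints are already vertices and every other vertex of $G$ on $c_i$ splits it into consecutive pieces of the correct $\alpha_i$-weighted length. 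The crux is an inter-road segment: Property~\ref{Property3} forces at least one of $t_i$, $s_{i+1}$ to be a road endpoint, and then Property~\ref{Property2} (or its symmetric counterpart at $t_i$) pins the other endpoint to either a road endpoint (yielding a step~6 edge) or one of the four $PR_f$ projection points (yielding a step~5 edge).

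The main obstacle I anticipate is the inter-road case analysis, and in particular the subcase where $s_{i+1}$ coincides with a road endpoint while $t_i$ lies in the interior of $c_i$. There one has to derive the Snell-type condition at $t_i$ by differentiating the cost of $(s_i,t_i)$ followed by $(t_i,s_{i+1})$ with respect to the exit point, mirroring the proof of Property~\ref{Property2}, and then match the resulting projection with one of the step~5 vertices under the correct $u$- versus $v$-endpoint direction convention underlying $PR_f$ and $PR_b$. Once every sub-piece of $\P$ is thus matched to an edge or chain of edges of $G$ of the same weight, the costs are equal and the observation follows; the optimal transportation path can then be recovered by a single run of Dijkstra on $G$.
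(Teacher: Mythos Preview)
Your two-inequality plan is exactly right, and since the paper offers no proof beyond the one-line assertion that the observation ``follows immediately from the construction of the graph and Properties~1--3,'' your outline is the natural fleshing-out of that sentence. You also put your finger on the only genuinely delicate step: the inter-road segment when $s_{i+1}$ is the start point $u_{i+1}$ of $c_{i+1}$ while $t_i$ lies in the interior of $c_i$.

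The trouble is that this subcase does not resolve the way you hope. The exit-Snell calculation you correctly anticipate gives $\angle(u_{i+1},t_i,v_i)=\arccos\alpha_i=\phi_i$, i.e.\ $t_i=PR_b(u_{i+1},c_i)$. But Step~5 of the construction only ever creates \emph{forward} projection vertices $PR_f(\cdot,\cdot)$ and only adds directed edges \emph{from} a road endpoint \emph{into} such a vertex on another road; no $PR_b$-type vertex on $c_i$ is created, and no edge leaves an interior point of $c_i$ toward $u_{i+1}$. Hence ``matching the resulting projection with one of the step~5 vertices'' cannot succeed as the graph is written. The same kind of omission affects your treatment of the first and last legs: if the optimal entry to $c_1$ is at $u_1$ (which happens when $PR_f(s,c_1)$ falls off the left end of $c_1$), Step~3 supplies no edge $(s,u_1)$, and there is no edge $(s,t)$ at all for the degenerate road-free path. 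These are gaps in the paper's graph definition rather than flaws in your proof strategy; once Step~5 is augmented with the symmetric exit edges $(PR_b(u_j,c_i),\,u_j)$ and the obvious source/target--to--endpoint edges are added, your argument goes through verbatim.
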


By simply running Dijkstra's algorithm~\cite{d-ntpcg-59}, implemented using Fibonacci heaps, on $G$ gives the main result of this section.
\begin{theorem}
  A path with minimum transportation distance between $s$ and $t$ can be computed in $O(n^2 \log n)$ time using $O(n^2)$ space.
\end{theorem}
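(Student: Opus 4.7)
The plan is to directly combine the structural results already established in the section. First I would invoke Lemma~\ref{lemma:building_graph} to construct the graph $G(V,E)$ with $|V|,|E|\in O(n^2)$ in $O(n^2 \log n)$ time, which also bounds the working space by $O(n^2)$. Then, by Observation~\ref{obs:shortestpath}, the cost of a minimum transportation path from $s$ to $t$ equals the cost of a shortest $s$-to-$t$ path in $G$, so it suffices to compute the latter.

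Next, I would run Dijkstra's algorithm on $G$ from the source $s$, implemented with Fibonacci heaps so that the running time is $O(|E| + |V|\log |V|) = O(n^2 + n^2 \log n^2) = O(n^2 \log n)$. The algorithm produces predecessor pointers along the shortest path tree rooted at $s$, from which the actual shortest path to $t$ in $G$ can be recovered in $O(n^2)$ time by backtracking. Adding the preprocessing cost keeps the overall bound at $O(n^2 \log n)$ time and $O(n^2)$ space.

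To go from the shortest path in $G$ to an actual path in the plane realising the minimum transportation distance, I would use the construction of $G$: each edge of $G$ is either a straight-line segment in the free space (steps~3--6) or a sub-segment of a road $c_i$ weighted by~$\alpha_i$ (step~7), and by Properties~\ref{Property1}--\ref{Property3} any optimal transportation path can be expressed as a concatenation of exactly such segments. Hence the sequence of edges returned by Dijkstra's algorithm already spells out the quickest path geometrically, with no further work required.

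The only subtlety worth emphasising is that Observation~\ref{obs:shortestpath} guarantees only \emph{some} optimal transportation path corresponds to a shortest path in~$G$; since we choose the specific optimal path $\P$ satisfying Properties~\ref{Property1}--\ref{Property3}, the correspondence is tight. There is no real obstacle beyond bookkeeping: the main work was done in Section~\ref{sec:Three Properties of an Optimal Path} and in Lemma~\ref{lemma:building_graph}, and the final theorem is their immediate corollary via a single invocation of Dijkstra's algorithm.
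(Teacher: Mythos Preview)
Your proposal is correct and matches the paper's own argument essentially verbatim: build $G$ via Lemma~\ref{lemma:building_graph}, invoke Observation~\ref{obs:shortestpath}, and run Dijkstra's algorithm with Fibonacci heaps to obtain the $O(n^2\log n)$ time and $O(n^2)$ space bounds. The paper states this in a single sentence before the theorem, so your additional remarks on path recovery and the role of Properties~\ref{Property1}--\ref{Property3} are just harmless elaboration.
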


\subsection{Shortest paths among polygon obstacles} \label{ssec:obstacles}
In this section we briefly discuss how the above algorithm can be
generalised to the case when the plane contains polygonal obstacles.
As input we are given a source point $s$, a destination point $t$,
a transportation network $\T(S,\C)$ in the Euclidean plane and a set
$\O$ of $k$ non-intersecting obstacles of total complexity $m$.

Every edge of an obstacle can be viewed as an undirected road (or
two directed edges) with associated cost function $\alpha=1$.
Consequently, the edge connecting a road and an obstacle edge
along the optimal path has the three properties described in
Section~\ref{sec:Three Properties of an Optimal Path}. However,
while constructing the graph we have to add one additional
constraint, namely, no edge in $E$ can intersect an obstacle.
According to these properties we are going to build the graph $G$
that models the set of roads and obstacles.

There are several methods to check if a segment intersects an
obstacle~\cite{as-anspt-93,cj-arsis-92,hs-parss-95}. We will use
the data structure by Agarwal and Sharir~\cite{as-anspt-93} which
has $O(m^{1+\eps}/\sqrt L)$ query time using $O(L^{1+\eps})$
preprocessing and space for $m\leq L \leq m^2$. Using this
structure with $L=m^2$ gives us the following results:

\begin{lemma} \label{lemma:building_graph2}
 The graph $G$ has size $O(N^{2+\eps})$ and can be constructed in time $O(N^{2+\eps})$, where $N=n+m$.
\end{lemma}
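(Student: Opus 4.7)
The plan is to treat every obstacle edge as an extra undirected ``road'' with weight $\alpha=1$, giving $N=n+m$ roads in total. Properties~\ref{Property1}--\ref{Property3} still hold pair by pair (they require only $\alpha\in(0,1]$), so the seven-step construction of Section~\ref{sec:Buildgraph} can be carried out verbatim on this enlarged set and produces $O(N^2)$ candidate vertices and $O(N^2)$ candidate straight-line edges, a constant number per ordered pair of roads.

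The only new ingredient is that a candidate edge may be kept only if it does not cross the interior of any obstacle. I would preprocess the $m$ obstacle edges once into the segment-intersection structure of Agarwal and Sharir~\cite{as-anspt-93} with parameter $L=m^2$, obtaining $O(m^{2+\eps})$ preprocessing time and space together with $O(m^{\eps})$ query time. Then, while running steps~3--7 of the construction, each of the $O(N^2)$ candidate edges is tested once: if the query reports a crossing the edge (and, when appropriate, its projection endpoint) is discarded; otherwise it is inserted into $E$.

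Summing the costs: the preprocessing is $O(m^{2+\eps})$; the $O(N^2)$ intersection queries cost $O(N^2\cdot m^{\eps})=O(N^{2+\eps})$; the per-road sorting of step~7 adds $O(N^2\log N)$, which is absorbed. Hence the total construction time is $O(N^{2+\eps})$, and the total storage (graph plus auxiliary structure) is $O(N^{2+\eps})$, matching the claim.

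The step that needs the most care is justifying that Properties~\ref{Property1}--\ref{Property3} really do survive the addition of obstacles. The key observation is that whenever an optimal path appears to bend around an obstacle corner it is in fact entering and then leaving an obstacle edge, which in our encoding is simply a road of speed weight $1$. Thus the optimal path's sequence of ``roads'' $\C_{\P}$ is drawn from the enlarged collection of $N$ roads, the three properties apply to every consecutive pair in $\C_{\P}$, and the analogue of Observation~\ref{obs:shortestpath} transfers without change. Once that is in place, the cost analysis above yields the stated bound.
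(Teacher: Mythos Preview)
Your proposal is correct and follows exactly the approach the paper outlines in the paragraph preceding the lemma: encode obstacle edges as weight-$1$ roads, rerun the seven-step construction on all $N=n+m$ segments, and filter each of the $O(N^2)$ candidate edges with the Agarwal--Sharir structure at $L=m^2$ (query $O(m^{\eps})$, space and preprocessing $O(m^{2+O(\eps)})$). The paper gives no separate formal proof of this lemma beyond that discussion, so your write-up is in fact more detailed than the original; the only cosmetic point is that the graph itself has size $O(N^2)$, and the $N^{2+\eps}$ bound arises solely from the auxiliary intersection structure, exactly as you note.
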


By simply running Dijkstra's algorithm, implemented using Fibonacci heaps, on $G$ gives:

\begin{theorem}
 A collision-free path with minimum transportation distance between $s$ and $t$ among $\O$ can be computed in $O(N^2 \log N)$ time using $O(N^2)$ space, where $N=n+m$.
\end{theorem}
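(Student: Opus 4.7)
The plan is to reuse the pipeline of the basic case: build a graph $G$ whose shortest $s$-to-$t$ path realises the minimum transportation distance, then recover the answer by running Dijkstra's algorithm on $G$. The two modifications, both already indicated in the surrounding text, are to treat every obstacle edge as an undirected road with weight $\alpha = 1$ (so that traversing it agrees with unit-speed free-space travel), and to reject any candidate edge of $G$ that crosses an obstacle, using the Agarwal--Sharir intersection data structure underlying Lemma~\ref{lemma:building_graph2}.

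For correctness, I would verify that Properties~\ref{Property1}--\ref{Property3} continue to govern an optimal collision-free path in the enriched road set $\C \cup \O$. Properties~\ref{Property1} and~\ref{Property2} are purely local angle-optimisation arguments and transfer verbatim, with $\alpha = 1$ forcing a right-angle incidence on obstacle edges. For Property~\ref{Property3}, the sliding argument still monotonically decreases cost in one direction, and it now terminates either at a road endpoint (as before) or, newly, against an obstacle edge; in the latter case the blocking edge itself lies in $\C \cup \O$, so it can be inserted as an additional intermediate element of $\C_{\P}$ at no extra cost (its weight is $1$), and Property~\ref{Property3} can then be applied to the finer transitions, with the blocking contact supplying the required endpoint-coincidence. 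The identification of optimal collision-free paths with shortest paths in $G$ then goes through exactly as in Observation~\ref{obs:shortestpath}, since the construction enumerates every locally-optimal transition and every retained edge is collision-free.

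The complexity follows by combining the construction bound of Lemma~\ref{lemma:building_graph2} with Dijkstra's algorithm implemented using Fibonacci heaps on a graph of $O(N^2)$ vertices and edges, giving the $O(N^2 \log N)$ time and $O(N^2)$ space claimed. The main technical obstacle is the extended sliding argument above: one must rule out pathological configurations in which the sliding segment is blocked by several obstacles simultaneously, or grazes an obstacle without crossing it. I would handle these by a careful case analysis showing that each such event still reduces to a transition incident to an endpoint of some road or obstacle edge in $\C \cup \O$, which is precisely the kind of transition already captured by steps (3)--(7) of the graph construction; once this is established, the remainder of the argument copies the basic case without further change.
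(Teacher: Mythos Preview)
Your proposal is correct and follows essentially the same approach as the paper: treat obstacle edges as unit-weight roads, discard graph edges that cross obstacles via the Agarwal--Sharir structure of Lemma~\ref{lemma:building_graph2}, and run Dijkstra with Fibonacci heaps on the resulting graph. The paper's own proof is in fact just the one-line observation that Dijkstra on $G$ yields the theorem, so your discussion of how Properties~\ref{Property1}--\ref{Property3} extend in the presence of obstacles is already more detailed than what the paper supplies.
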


\section{Shortest Path Queries} \label{sec:query_version}

In this section we turn our attention to the query version. That
is, preprocess $\C$ such that given any two points $s$ and $t$ in
$\Reals^2$ find a quickest path between $s$ and $t$ among $\C$
effectively. We will present a data structure $\D$ that returns an
approximate quickest path. That is, given two query points $s$ and
$t$, and a positive real value $\eps$, the data structure returns
a path between~$s$ and $t$ having transportation distance at most
$(1+\eps)$ times the cost of an optimal path between~$s$ and $t$.

To simplify the description we will start (Sections~\ref{ssec:constructing_candidates}-\ref{ssec:preprocessquery}) with the case when $t$ is already known in advance and we are only given the start point $s$ as a query. Then in Section~\ref{sec:generalization} we generalize this to the case when both $s$ and $t$ are given as a query, and in Section~\ref{ssec:wspd} we show how one can improve the preprocessing time and space requirement using the well-separated pair decomposition.

Fix an optimal path $\P$ that fulfills
Properties~\ref{Property1}-\ref{Property3} (we know such a path
exists), and consider the first segment $\ell=(s,s_1)$ of $\P$.
Obviously $s_1$ must be either a start/endpoint of a road (type 1)
or an interior point of a road $c_i\in \C$ (type 2) such that
$\ell$ and $c_i$ form an angle of $\phi_i$ (ignoring the trivial
case when $s_1=t$). We will use this observation to develop an
approximation algorithm. The idea is simple. Build a graph
$G(V,E)$, as described in Section~\ref{sec:Buildgraph}, with $\T$
and $t$ as input. Compute the cost of the quickest path between
$t$ and every vertex in $V$. Now, given a query point $s$, find a
good vertex $s_1$ in $V$ to connect $s$ (either directly or via a
$2$-link path) to and then lookup the cost of the quickest path
from $s_1$ to $t$ in $G$. Obviously the problem is how to find a
``good" vertex. In the next subsection we will select a constant
number of candidate vertices such that we can guarantee that at
least one of the vertices will be a ``good" candidate, i.e., there
exists a path, that fulfills
Properties~\ref{Property1}-\ref{Property3}, from $s$ to $t$ via
this vertex that has cost at most $(1+\eps)$ times the cost of an
optimal path.


\subsection{Finding good candidates: Type 1 and Type 2} \label{ssec:constructing_candidates}
Let $\S_{\C}$ denote the set of the endpoints (both start and end)
of the roads in $\C$ and let $s$ be the query point. As described
above we will have two types to consider, and we will construct a
set $\D_1$ for the type 1 cases and a set $\D_2$ for the type 2
cases. The first set, $\D_1$, is a subset of $\S_{\C}$ and the
second set, $\D_2$, is a set of 3-tuples that will be used by the
query process (described in Section~\ref{ssec:preprocessquery}) to
calculate the quickest path.


\subsubsection{Type 1:} For the point set $\D_1$ we will use the same idea as is used in the construction of $\theta$-graphs~\cite{ns-gsn-07}. Partition the plane into a set of $k=\max\{9,\frac {36\pi}{\eps}\}$ equal size cones, denoted $X_1, \ldots, X_k$, with apex at $s$ and spanning an angle of $\theta=2\pi/k$, as shown in Fig.~\ref{fig:Query1}a. For each cone $X$ the set $\D_1$ contains a point $r$, where $r$ is a point in $\S_{\C}\cap X$ whose orthogonal projection onto the bisector of $X$ is closest to $s$. The following holds:

\begin{lemma} \label{lem:type1_set}
 Given a point set $\S_{\C}$ and a positive constant $\eps$ one can preprocess $\S_{\C}$ into a data structure of size $O(n/\eps)$ in $O(1/\eps \cdot n \log n)$ time such that given a query point $s$ the point set $\D_1$, of size at most $36\pi/\eps$, can be reported in $O(1/\eps \cdot \log n)$ time.
\end{lemma}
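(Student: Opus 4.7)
The plan is to handle each of the $k = \Theta(1/\eps)$ cone orientations independently, and for each one to build a data structure of size $O(n)$ that, in $O(\log n)$ time, reports the point of $\S_{\C}$ lying inside the cone placed at the query apex $s$ whose projection onto the bisector is nearest $s$. Taking the union of the $k$ answers then produces $\D_1$ within the required bounds.

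For a single cone orientation I would first rotate the coordinate frame so that the cone's bisector points in the positive $y$-direction, and set $t=\tan(\theta/2)$. Membership of a point $p=(p_x,p_y)$ in the cone with apex $s=(s_x,s_y)$ is then equivalent to the pair of half-plane conditions $p_x - t\cdot p_y \le s_x - t\cdot s_y$ and $p_x + t\cdot p_y \ge s_x + t\cdot s_y$. Introducing new coordinates $u_p = -(p_x - t\cdot p_y)$ and $v_p = p_x + t\cdot p_y$, both conditions turn into a single axis-aligned dominance constraint $u_p \ge u_s$ and $v_p \ge v_s$; moreover, the distance from $s$ to the projection of $p$ onto the bisector is, up to an additive term depending only on $s$, a positive multiple of $u_p + v_p$. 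The per-cone query therefore reduces to the following 2-dimensional problem: among the points of $\S_{\C}$ that dominate the transformed query in both coordinates, report the one that minimises $u_p + v_p$.

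This kind of dominance-plus-optimisation query admits a structure of linear size with $O(n\log n)$ construction time and $O(\log n)$ query time. Concretely, I would sort the points in decreasing order of $u$, insert them one by one into a balanced search tree keyed on $v$ whose internal nodes are annotated with the subtree minimum of $u+v$, and make this tree persistent via node-copying (Driscoll--Sarnak--Sleator--Tarjan), so that the total storage across all versions remains $O(n)$. A query $(u_s,v_s)$ binary-searches on $u$ to locate the version of the tree built from exactly those points satisfying $u_p \ge u_s$, then descends it to extract the minimum $u+v$ among keys $\ge v_s$. Summing over the $k=O(1/\eps)$ cone orientations yields total preprocessing $O((n\log n)/\eps)$, space $O(n/\eps)$ and query time $O((\log n)/\eps)$, matching the statement of the lemma.

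The main obstacle is cosmetic rather than combinatorial: carefully checking that the linear change of coordinates $(p_x,p_y)\mapsto(u_p,v_p)$ genuinely converts the two oblique half-planes bounding the cone into axis-aligned dominance constraints, and that the bisector-projection objective is an affine function of $u_p+v_p$ with positive slope. Once this reduction is in place, the data-structural step is standard and closely mirrors the construction of $\theta$-graphs attributed in the paper to Narasimhan and Smid~\cite{ns-gsn-07}.
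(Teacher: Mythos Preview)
Your proposal is correct and follows essentially the same approach as the paper: handle each of the $O(1/\eps)$ cone directions separately with a linear-size structure supporting $O(\log n)$ queries, then take the union. The paper simply cites this per-cone structure from the $\theta$-graph literature (Narasimhan--Smid, Bose--Gudmundsson--Morin), whereas you reconstruct it explicitly via the affine change of variables to a dominance query and a persistent balanced search tree; your coordinate computation is right (indeed $u_p+v_p=2t\,p_y$, so minimising it is equivalent to minimising the bisector projection), and the resulting bounds match.
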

\ifproceedings
 \begin{proof} The proof can be found in Appendix~A. \hfill $\square$ \end{proof}
\else
\begin{proof}
  Given a direction $d$ and a point $s$ let $\ell_d(s)$ denote the infinite ray originating at $s$ with direction~$d$, see Fig.~\ref{fig:Query1}b. Let $C(s,d,\theta)$ be the cone with apex at $s$, bisector $\ell_d(s)$ and angle $\theta$. It has been shown (see for example Section~4.1.2 in~\cite{ns-gsn-07} or Lemma~2 in~\cite{bgm-otg-04}) that $\S_{\C}$ can be preprocessed in $O(n \log n)$ time into a data structure of size $O(n)$ such that given a query point $s$ in the plane the data structure returns the point in $C(s,d,\theta)$ whose orthogonal projection onto $\ell_d(s)$ is closest to $s$ in $O(\log n)$ time. We have $36\pi/\eps$ directions, thus the lemma follows. \hfill $\square$
\end{proof}
\fi

 \begin{figure} [h]
   \begin{center}
      \includegraphics[width=10cm]{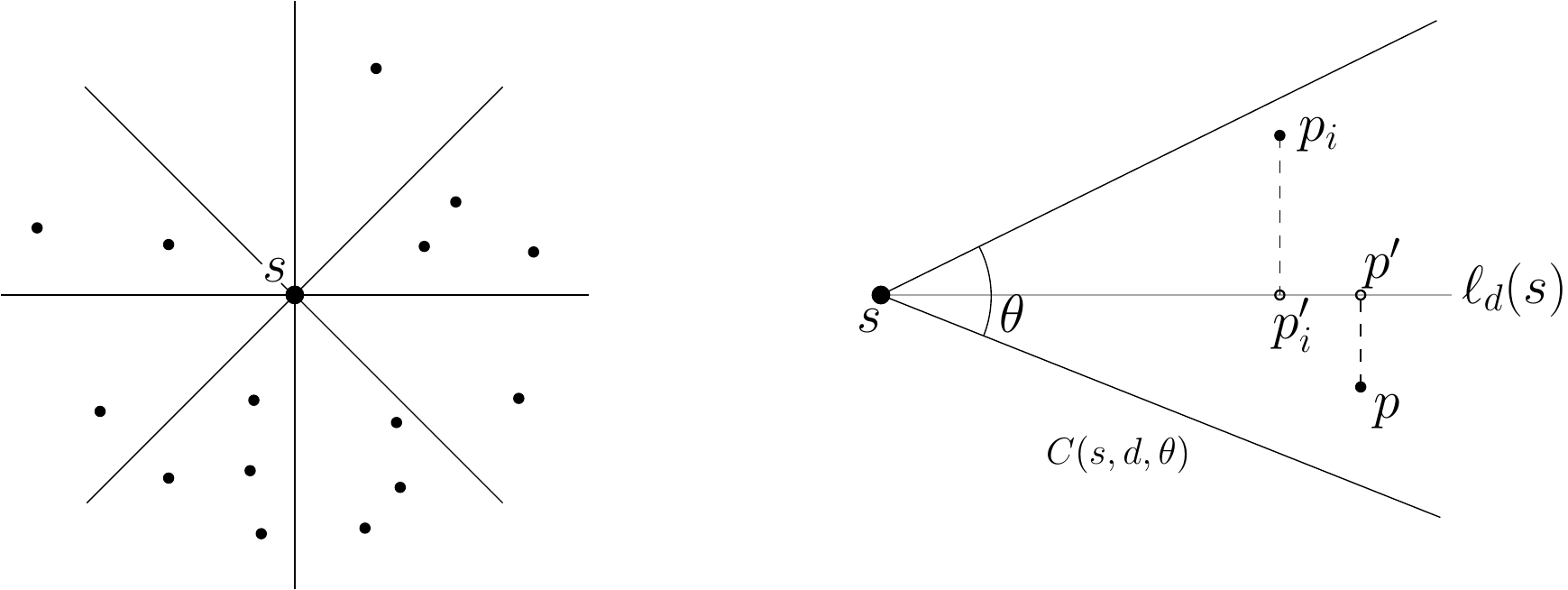}
      \caption{(a) Partitioning the plane into $k$ cones. (b) Selecting the point whose orthogonal projection onto the bisector of $X$.} \label{fig:Query1}
   \end{center}
 \end{figure}


\subsubsection{Type 2:} It remains to construct the set $\D_2$ of $3$-tuples. Unfortunately the construction might look
unnecessarily complicated but hopefully it will become clear, when
we prove the approximation bound
(Section~\ref{ssec:approximation_bound}) why we need this
construction. Before constructing $\D_2$ we need some basic
definitions.

Let $\alpha_{\max}$ and $\alpha_{\min}$ be the maximum and minimum weight of the roads in $\C$. Partition the set $\C$ of roads into a minimum number of sets, $\C_1, \ldots , \C_m$, such that the orientation of a road in $\C_i$ is in $[(i-1)\theta \cdot \alpha_{\min}, i\theta \cdot \alpha_{\min})$, where $\theta=\eps/18$. Partition each set $\C_i$, $1\leq i \leq m$, into $b$ sets $\C_{i,1}, \ldots , \C_{i,b}$ such that the weight of a road in $\C_{i,j}$ is in $[\alpha_{\min} \cdot (1+\eps)^{j-1},\alpha_{\min} \cdot (1+\eps)^j)$.

For every $i,j$, $1\leq i \leq m$ and $1\leq j \leq b$, define two
directions $\gamma_{up}(i,j)$ and $\gamma_{down}(i,j)$ as follows
(see also Fig.~\ref{fig:Query2}a). Consider an infinite ray
$\Upsilon$ with orientation $(i-1)\theta$ having weight
$\alpha=\alpha_{\min} \cdot \eps^{j-1}$. For simplicity we rotate
the ray such that it is horizontal directed from left to right.
Let $\gamma_{up}(i,j)$ be the direction of a ray $r$ originating
from below $\Upsilon$ such that $r$ and $\Upsilon$ meet at an
angle of $\arccos(\alpha)$. The direction $\gamma_{down}(i,j)$ is
defined symmetrically but with a ray originating from above
$\Upsilon$.

Given a point $p$ on a road $c$ let $N(p,c)$ denote the nearest vertex of $G$ to $p$ along $c$. Note that $N(p,c)$ must lie between $p$ and the end point of $c$.

Now, we are ready to construct $\D_2$. When given the query point $s$ construct the set $\D_2$ as follows. For each $i,j$, $1\leq i \leq m$ and $1\leq j \leq b$, shoot a ray $r_{up}$ originating from $s$ in direction $\gamma_{up}(i,j)$ and one ray $r_{down}$ in direction $\gamma_{down}(i,j)$. If $r_{up}$ hits a road in $\C_{i,j}$ then let $c_{up}$ be the first road hit and let $p_{up}$ be the point hit on $c_{up}$, as illustrated in Fig.~\ref{fig:Query2}b. The $3$-tuple $[c_{up},p_{up}, N(p_{up},c_{up})]$ is added to $\D_2$. If $r_{down}$ hits a road in $\C_{i,j}$ then let $c_{down}$ be the first road hit and let $p_{down}$ be the point hit on $c_{down}$. The $3$-tuple $[c_{down},p_{down}, N(p_{down},c_{down})]$ is added
to $\D_2$.

\begin{lemma} \label{lem:type2_set}
 Given a transportation network $\T(S,\C)$ with $n$ roads in the Euclidean plane and a positive constant $\eps$ one can preprocess $\T$ in $O(n \log n)$ time into a data structure of size $O(n)$ such that given a query point $s$ the set $\D_2$ can be reported in $O(\frac {1}{\alpha_{\min}\cdot\eps^2} \cdot \log n \log_{1+\eps} \frac{\alpha_{\max}}{\alpha_{\min}})$ time. The number of $3$-tuples in $\D_2$ is $O(\frac{1} {\alpha_{\min} \eps^2} \cdot \log_{1+\eps} \frac{\alpha_{\max}}{\alpha_{\min}})$.
\end{lemma}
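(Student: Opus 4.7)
The strategy is, for each orientation--weight class $\C_{i,j}$ of the partition, to precompute a fixed-direction ray-shooting structure for each of the two query directions $\gamma_{up}(i,j)$ and $\gamma_{down}(i,j)$, and to derive the lemma's three bounds by summing over classes.

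I would first bound $|\D_2|$. Road orientations lie in $[0,2\pi)$ and the partition into $\C_i$ uses bins of width $\theta\cdot\alpha_{\min}=\eps\alpha_{\min}/18$, so $m=O(1/(\alpha_{\min}\eps))$. Weights lie in $[\alpha_{\min},\alpha_{\max}]$ and the partition into $\C_{i,j}$ uses a geometric progression with ratio $1+\eps$, so $b=O(\log_{1+\eps}(\alpha_{\max}/\alpha_{\min}))$. At most two $3$-tuples are generated per $(i,j)$ pair (one from $r_{up}$, one from $r_{down}$), which yields the stated size bound for $\D_2$.

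For the data structure itself, I would, for each class $\C_{i,j}$, rotate the plane so that $\gamma_{up}(i,j)$ points vertically upward and apply a standard vertical ray-shooting structure for pairwise non-crossing line segments (for instance a trapezoidal map combined with planar point location, or a persistent search tree over vertical slabs) to the segments of $\C_{i,j}$. Since the roads of $\T$ are non-intersecting by assumption, so are the segments in each $\C_{i,j}$. This yields $O(|\C_{i,j}|)$ space, $O(|\C_{i,j}|\log|\C_{i,j}|)$ preprocessing, and $O(\log |\C_{i,j}|)$ query time per ray. The same construction is performed for $\gamma_{down}(i,j)$. Summing over all $mb$ classes and using $\sum_{i,j}|\C_{i,j}|=n$ gives the claimed $O(n)$ total space and $O(n\log n)$ total preprocessing.

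At query time the algorithm iterates over the $2mb$ pairs, performs the corresponding fixed-direction ray shooting in $O(\log n)$ time to recover (if it exists) the first hit road $c$ and its hit point $p$, and then binary searches for $N(p,c)$ in the already-sorted list of $G$-vertices along $c$ produced by the construction of Section~\ref{sec:Buildgraph}, adding only an $O(\log n)$ factor. Multiplying by $2mb$ gives the stated query-time bound. The main subtlety, which I would handle by citing a classical planar primitive rather than re-deriving it, is the $O(\log n)$-query, $O(n)$-space, $O(n\log n)$-preprocessing bound for vertical ray shooting among non-crossing segments; once that primitive is in hand, the rest of the proof is just careful bookkeeping of the partition sizes $m$ and $b$.
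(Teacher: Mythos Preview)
Your proposal is correct and follows essentially the same approach as the paper: partition $\C$ into the classes $\C_{i,j}$, rotate each so that the relevant query direction becomes vertical, build a trapezoidal map with point location to answer the fixed-direction ray shooting, and sum the $O(|\C_{i,j}|\log|\C_{i,j}|)$ and $O(|\C_{i,j}|)$ costs over all classes. The paper's proof is the same, using trapezoidal maps $T_{up}(\C_{i,j})$ and $T_{down}(\C_{i,j})$ explicitly; your additional remark that $N(p,c)$ is found by binary search in the sorted vertex list along $c$ is a detail the paper leaves implicit but handles the same way.
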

\ifproceedings
 \begin{proof} The proof can be found in Appendix~A. \hfill $\square$ \end{proof}
\else
\begin{proof}
The preprocessing consists of two steps: (1) partitioning $\C$ into the sets $\C_{i,j}$, $1\leq i \leq m$ and $1\leq j \leq b$, and (2) preprocessing each set $\C_{i,j}$ into a data structure that answers ray shooting queries efficiently.

The first part is easily done in $O(n \log n)$ time by sorting the roads first with respect to their orientation and then with respect to their weight.

The second step of the preprocessing can be done by building two trapezoidal maps $T_{up}(\C_{i,j})$ and $T_{down}(\C_{i,j})$ (also know as a vertical decomposition) of each set $\C_{i,j}$ as follows (see Fig.~\ref{fig:Query2}c). Rotate $\C_{i,j}$ such that $\gamma_{up}(i,j)$ is vertical and upward. Build a trapezoidal map $T_{up}(\C_{i,j})$ of $\C_{i,j}$ as described in Chapter~6.1 in~\cite{bcko-cgaa-08}. Then preprocess $T_{up}(\C_{i,j})$ to allow for planar point location. Note that every face in $T(\C)$ either is a triangle or a trapezoid, and the left and right edges of each face (if they exists) are vertical. The trapezoidal map $T_{down}(\C_{i,j})$ can be computed in the same way by rotating $\C_{i,j}$ such that $\gamma_{down}(i,j)$ is vertical and upward. The total time needed for this step is $O(n \log n)$ and it requires $O(n)$ space.

When a query point $s$ is given, two ray shooting queries are performed for each set $\C_{i,j}$. However, instead we perform a point location in the trapezoidal maps $T_{up}(\C_{i,j})$ and $T_{down}(\C_{i,j})$. Consider $T_{up}(\C_{i,j})$ and let $f$ be the face in the map containing $s$. The top edge of $f$ corresponds to the first road $c_{up}$ hit by a ray emanating from $s$ in direction $\gamma_{up}(i,j)$. When $c_{up}$ is found we just add to $\D_2$ the first vertex on $c_{up}$ in $G$ to the right of $s$. The same process is repeated for $T_{down}(\C_{i,j})$. Performing the point location requires $O(\log n)$ time per trapezoidal map, thus the total query time is $O(mb \log n) = O(\frac {1}{\alpha_{\min}\cdot\eps^2} \cdot \log n \log_{1+\eps} (\alpha_{\max}/\alpha_{\min}))$. \hfill $\square$
\end{proof}
\fi

 \begin{figure} [h]
   \begin{center}
      \includegraphics[width=14cm]{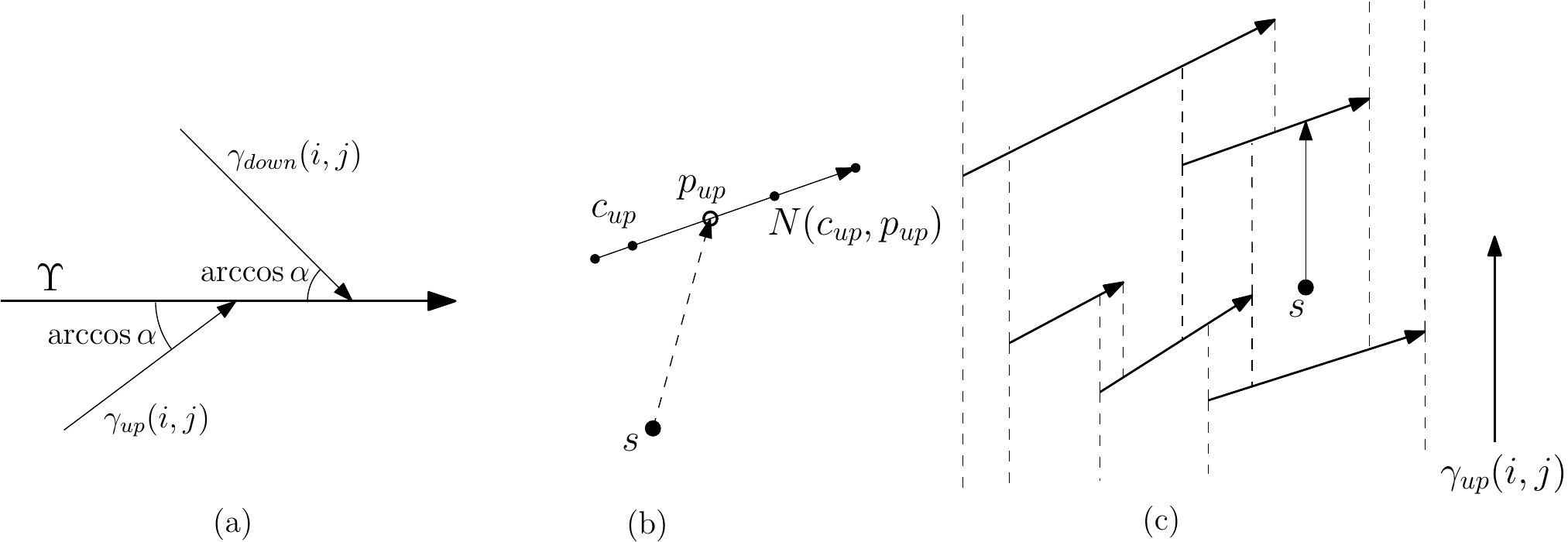}
      \caption{(a) Illustrating the definition of $\gamma_{up}(i,j)$ and $\gamma_{down}(i,j)$. (b) $p_{up}$ is the first point hit by the ray and $N(c_{up},p_{up})$ is the nearest neighbour of $p_{up}$ along $c_{up}$. (c) The trapezoidal map of a set $\C_{i,j}$ and the query point~$s$.} \label{fig:Query2}
   \end{center}
 \end{figure}


\subsection{The preprocessing and the query}\label{ssec:preprocessquery}

In this section we will present the remaining data structure, define the preprocessing and show how a query is processed.

\subsubsection{Preprocessing}
In Section~\ref{sec:Buildgraph} we showed how to build a graph given two points and the transportation network $\T(S,\C)$. The first step of the preprocessing is to build a graph $G(V,E)$ of $\C$ and the destination point $t$ (without the source point). Next compute the shortest path from every vertex in $G$ to the vertex corresponding to $t$. Since the complexity of $G$ is quadratic, this step can be done in $O(n^2 \log n)$ time and $O(n^2)$ space using Dijkstra's algorithm\footnote{The SSSP has to be performed from $t$ on a graph $G'$ where every edge has swapped direction.}. The distances are saved in a matrix $M$. Finally, we combine the above results with Lemmas~\ref{lem:type1_set} and~\ref{lem:type2_set} and get:

\begin{theorem} \label{thm:preprocessing}
 The preprocessing requires $O(n^2 \log n)$ time and $O(n^2)$ space.
\end{theorem}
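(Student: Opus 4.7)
The plan is to simply sum the costs of the three preprocessing components already established earlier in the paper, verifying that none of them exceeds the claimed bounds.

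First I would invoke Lemma~\ref{lemma:building_graph} to build the graph $G(V,E)$ on input $\C$ together with the destination point $t$ alone (so $|V|,|E|=O(n^2)$ and the construction takes $O(n^2 \log n)$ time). Next I would run Dijkstra's algorithm from $t$ on the reversal of $G$, implemented with Fibonacci heaps, which produces single-source shortest-path costs to every vertex in $V$ in time $O(|E| + |V|\log |V|) = O(n^2 \log n)$ and $O(n^2)$ space. I would then store these $|V| = O(n^2)$ distances in the matrix $M$, which again requires $O(n^2)$ space and only linear time in the output size to populate.

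Finally I would apply Lemma~\ref{lem:type1_set} to preprocess $\S_{\C}$ for Type~1 candidate reporting, costing $O(n\log n/\eps)$ time and $O(n/\eps)$ space, and Lemma~\ref{lem:type2_set} to preprocess $\T$ for Type~2 candidate reporting, costing $O(n\log n)$ time and $O(n)$ space. Since $\eps$ is treated as a positive constant in the overall bound (the $\eps$-dependence is absorbed into the query-time analysis), both of these are dominated by the $O(n^2\log n)$ term and the $O(n^2)$ space term from the graph construction and the distance matrix.

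There is no real obstacle here; the statement is a bookkeeping consequence of the earlier lemmas. The only thing worth being careful about is that Dijkstra must be run on the edge-reversed graph $G'$ (as noted in the footnote) so that the stored distances in $M$ represent costs \emph{from} arbitrary vertices \emph{to} $t$, matching the direction in which the query algorithm will later consult them. Combining all terms yields the claimed $O(n^2 \log n)$ preprocessing time and $O(n^2)$ space. \hfill $\square$
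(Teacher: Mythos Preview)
Your proposal is correct and follows essentially the same argument as the paper: the paragraph preceding the theorem already lays out exactly these steps—build $G$ via Lemma~\ref{lemma:building_graph}, run Dijkstra from $t$ on the reversed graph to fill $M$, and invoke Lemmas~\ref{lem:type1_set} and~\ref{lem:type2_set}—and then states the theorem as the resulting summary. Your remark about the edge-reversed graph matches the paper's own footnote.
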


\subsubsection{Query} \label{ssec:query_complexity}
As a query we are given a point $s$ in the plane. First we compute
the two sets $\D_1$ and $\D_2$.  For each vertex $s_1$ in $\D_1$
compute the quickest path via $s_1$, that is, $|ss_1|+M[s_1,t]$.
The quickest among these paths is denoted $\P_1$. The $3$-tuples
in $\D_2$ require slightly more computation. For each $3$-tuple
$[c,s_1,s_2]$ consider the path from $s$ to $t$ using $(s,s_1)$
and $(s_1,s_2)$. The cost of the path can be calculated as
$|ss_1|+\alpha \cdot |s_1s_2|+M[s_2,t]$, where $\alpha$ is the
weight of road $c$. The quickest among these paths is denoted
$\P_2$.

The quickest path among $\P_1$, $\P_2$ and the direct path from
$s$ to $t$ is then reported.

\begin{theorem} \label{thm:query}
 A query can be answered in time $O(\frac{1}{\alpha_{\min}\eps^2} \cdot \log n \log_{1+\eps} \frac{\alpha_{\max}}{\alpha_{\min}})$.
\end{theorem}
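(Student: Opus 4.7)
The plan is simply to sum the running times of the three phases that constitute the query procedure described in Section~\ref{ssec:query_complexity}: building $\D_1$, building $\D_2$, and scanning the resulting candidates against the precomputed matrix $M$. All three costs can be read off from results already established in the paper, so the proof is essentially an accounting argument.

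For the first phase, I would invoke Lemma~\ref{lem:type1_set} to construct $\D_1$ in $O(\frac{1}{\eps} \log n)$ time, yielding $|\D_1| = O(1/\eps)$ candidates. Since $M$ stores, for every vertex of $G$, the quickest-path cost to $t$ computed during the preprocessing of Theorem~\ref{thm:preprocessing}, each candidate $s_1 \in \D_1$ contributes a path cost $|ss_1| + M[s_1,t]$ that is evaluated by one Euclidean distance computation plus one matrix lookup, i.e., constant time. Hence selecting $\P_1$ costs $O(1/\eps)$ additional time, which is absorbed by the construction cost of $\D_1$.

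For the second phase, I would invoke Lemma~\ref{lem:type2_set} to construct $\D_2$ in $O\!\left(\frac{1}{\alpha_{\min}\eps^2} \log n \log_{1+\eps} \frac{\alpha_{\max}}{\alpha_{\min}}\right)$ time, yielding $O\!\left(\frac{1}{\alpha_{\min}\eps^2} \log_{1+\eps} \frac{\alpha_{\max}}{\alpha_{\min}}\right)$ 3-tuples. For each tuple $[c,s_1,s_2]$ the candidate cost $|ss_1| + \alpha\cdot|s_1s_2| + M[s_2,t]$ is again constant time (two Euclidean distances, the weight $\alpha$ stored with $c$, and one matrix lookup), so determining $\P_2$ takes time proportional to $|\D_2|$, which is subsumed by the construction cost. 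Finally comparing $\P_1$, $\P_2$, and the direct segment $(s,t)$ is $O(1)$. Summing, the bound from Lemma~\ref{lem:type2_set} dominates and gives the claimed query time. There is no real obstacle here: the only facts one has to trust are that $M$ supports constant-time distance lookups from the vertices produced by $\D_1$ and $\D_2$ (guaranteed because both lemmas return points that are vertices of $G$, namely $\S_\C \subseteq V$ for $\D_1$ and $N(p,c) \in V$ for $\D_2$), and that Lemmas~\ref{lem:type1_set} and~\ref{lem:type2_set} already account for all the geometric work performed.
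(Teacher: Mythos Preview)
Your proposal is correct and follows essentially the same approach as the paper: split the cost into the Type~1 and Type~2 phases, invoke Lemmas~\ref{lem:type1_set} and~\ref{lem:type2_set} respectively for the construction costs and candidate counts, observe that each candidate is processed in $O(1)$ time via a lookup in $M$, and note that the Type~2 bound dominates. Your added remark that the points returned by $\D_1$ and $\D_2$ are indeed vertices of $G$ (so that $M$-lookups are valid) is a detail the paper leaves implicit.
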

\ifproceedings
 \begin{proof} The proof can be found in Appendix~A. \hfill $\square$ \end{proof}
\else
\begin{proof}
 As above we divide the analysis into two parts: type 1 and type 2.
 \begin{description}
  \item[] Type 1: According to Lemma~\ref{lem:type1_set} the number of type 1 candidate points is at most $32\pi/\eps$ and can be computed in time $O(1/\eps\cdot \log n)$. Computing the cost of a quickest path for a point in $\D_1$ can be done in constant time. Thus, the query time is $O(1/\eps \log n)$.
  \item[] Type 2: According to Lemma~\ref{lem:type2_set} the number of $3$-tuples in $\D_2$ is $O(\frac{1}{\alpha_{\min}\eps^2} \cdot \log_{1+\eps} \alpha_{\max}/\alpha_{\min})$ and can be computed in time $O(\frac{1}{\alpha_{\min}\eps^2} \cdot \log n \log_{1+\eps} \alpha_{\max}/\alpha_{\min})$. Each element in $\D_2$ is then processed in $O(1)$ time, thus $O(\frac{1}{\alpha_{\min}\eps^2} \cdot \log n \log_{1+\eps} \alpha_{\max}/\alpha_{\min})$ in total.
 \end{description}
 Summing up we get the bound stated in the theorem.
\hfill $\square$ \end{proof}
\fi


\subsection{Approximation bound} \label{ssec:approximation_bound}

Consider an optimal path $\P$ and let $\ell=(s,s_1)$ be the first segment of $\P$. According to Property~\ref{Property1} it is a straight-line segment. For any two points $p_1$ and $p_2$ along $\P$, let $\delta_G(p_1,p_2)$ denote the cost of $\P$ from $p_1$ to $p_2$.  Define $s(\ell)$ to be the sector with apex at $s$, bisector on $\ell$, interior angle $\kappa=\eps/18$ and radius $(1+4\kappa)\cdot |ss_1|$, as shown in Fig.~\ref{fig:apx_bound1}(a).

\begin{lemma} \label{lem:apx_bound1}
 Let $\eps$ be a positive constant. If $s(\ell)\cap \S_{\C}\neq \emptyset$ then:
    $$wt(\P) \leq wt(\P_1) \leq (1+\eps) \cdot wt(\P).$$
\end{lemma}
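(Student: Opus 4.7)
The lower bound $wt(\P) \leq wt(\P_1)$ is immediate from optimality of $\P$, so the real content is the upper bound $wt(\P_1) \leq (1+\eps)\,wt(\P)$. My plan is to exhibit a single candidate $r \in \D_1$ such that the two-segment route $s \to r \to t$ (with the second leg realised by the shortest $r$-to-$t$ path in $G$) already has cost at most $(1+\eps)\,wt(\P)$; this suffices because $\P_1$ is defined as the quickest among all such candidate paths. Concretely, pick any witness $r^* \in s(\ell) \cap \S_{\C}$ (non-empty by hypothesis), let $X^*$ be the cone of our partition at $s$ containing $r^*$, let $b^*$ denote its bisector, and let $r$ be the point selected from $X^* \cap \S_{\C}$ by the rule of Section~\ref{ssec:constructing_candidates}.

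The heart of the argument is that $r$ is simultaneously close to $s_1$ in distance from $s$ and in direction from $s$. For the distance, the selection rule guarantees that the orthogonal projection of $r$ onto $b^*$ is no farther from $s$ than the projection of $r^*$; since every point of $X^*$ makes angle at most $\theta/2$ with $b^*$ at $s$, this gives
\[
|sr| \;\leq\; \frac{|sr^*|}{\cos(\theta/2)} \;\leq\; \frac{(1+4\kappa)\,|ss_1|}{\cos(\theta/2)} \;=\; (1+O(\kappa))\,|ss_1|,
\]
where the last equality uses $\theta \leq \kappa = \eps/18$. For the direction, $r$ and $r^*$ both lie in $X^*$ so their angular separation at $s$ is at most $\theta$, while $r^* \in s(\ell)$ is within half-angle $\kappa/2$ of $\ell = (s,s_1)$; hence $\angle rss_1 \leq \theta + \kappa/2 \leq 3\kappa/2$. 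Plugging these two bounds into the law of cosines on triangle $srs_1$ then yields $|rs_1| = O(\kappa)\,|ss_1|$.

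To finish, observe that $r$ is a road endpoint and hence a vertex of $G$, and the optimal transportation path from $r$ to $t$ is representable in $G$ (by the same structural reasoning justifying Observation~\ref{obs:shortestpath}), so $M[r,t] \leq d_{\T}(r,t)$. Applying the triangle inequality to $d_{\T}$ gives $M[r,t] \leq |rs_1| + d_{\T}(s_1,t) \leq |rs_1| + wt(\P) - |ss_1|$, and therefore
\[
wt(\P_1) \;\leq\; |sr| + M[r,t] \;\leq\; wt(\P) + \bigl(|sr| + |rs_1| - |ss_1|\bigr) \;\leq\; wt(\P) + O(\kappa)\,|ss_1|;
\]
using $|ss_1| \leq wt(\P)$ this becomes $(1+O(\kappa))\,wt(\P)$, which collapses to $(1+\eps)\,wt(\P)$ once the implicit constant is tracked against $\kappa = \eps/18$. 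The main obstacle is exactly this final constant-chasing step: combining the $\sec(\theta/2)$ slack in the bound on $|sr|$ with the law-of-cosines bound on $|rs_1|$ and verifying, via elementary small-angle estimates for $\cos$ and $\sec$, that the total multiplicative overhead stays at most $\eps$ for the chosen value of $\kappa$.
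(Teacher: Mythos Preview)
Your overall strategy is sound and close to the paper's, but there is a genuine gap in the middle. You assert that the upper bound $|sr| \leq (1+O(\kappa))|ss_1|$ together with the angle bound $\angle rss_1 \leq 3\kappa/2$ yields $|rs_1| = O(\kappa)|ss_1|$ via the law of cosines. This is false: you have no \emph{lower} bound on $|sr|$. The selection rule for $\D_1$ picks the point of $\S_{\C}\cap X^*$ with smallest projection onto the bisector, so $r$ may lie arbitrarily close to $s$; in that case $|rs_1|$ is close to $|ss_1|$, not $O(\kappa)|ss_1|$. The law of cosines with only an upper bound on one side and on the included angle simply does not control the opposite side.

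What you actually need---and what is true---is the weaker detour bound $|sr| + |rs_1| - |ss_1| \leq O(\kappa)|ss_1|$. The cleanest fix is to route through the witness $r^*$ rather than discard it: first bound $|sr| + |rr^*| \leq (1+O(\kappa))|sr^*|$ using only that $r$ has smaller projection than $r^*$ in a cone of aperture $\theta\leq\kappa$ (this is the standard $\theta$-graph estimate and needs no lower bound on $|sr|$), and then bound $|sr^*| + |r^*s_1| \leq (1+O(\kappa))|ss_1|$ using that $r^*$ lies in the thin sector $s(\ell)$. Chaining via the triangle inequality gives $|sr| + |rs_1| \leq |sr| + |rr^*| + |r^*s_1| \leq (1+O(\kappa))|ss_1|$, which is exactly what your final display needs. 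This two-step route $s\to r\to r^*\to s_1$ is precisely the paper's decomposition into Part~1 (the $r^*\to s_1$ leg) and Part~2 (the $r\to r^*$ leg); your attempt to collapse it into a single law-of-cosines step is where the argument breaks.
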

\begin{proof}
 The proof will be shown in two steps: (1) first we prove that for every endpoint $p$ within $s(\ell)$ the quickest path, denoted
 $\P(p)$, from $s$ to $t$ using the segment $(s,p)$ has cost at most $(1+12\kappa)\cdot wt(\P)$, and then (2) we prove that $wt(\P_1)$
 has cost at most $(1+3\kappa) \cdot wt(\P(p))$. Combining the two parts proves the lemma.

 Part 1: Consider any point $p$ within $s(\ell)$, and let $\P(p)$ denote the quickest path from $s$ to $t$ using the segment $(s,p)$. We have:
 \begin{eqnarray*}
  wt(\P(p)) & = & |sp|+ \delta_G(p,t) \\
          & \leq & |sp| + |ps_1| +\delta_G(s_1,t) \\
          & \leq & (1+4\kappa) \cdot |ss_1| + \sqrt{(\cos \frac {\kappa} 2-1-4\kappa)^2+\sin^2 \frac {\kappa} 2}\cdot |ss_1| + \delta_G(s_1,t)\\
          & < & (1+12\kappa) \cdot |ss_1| + \delta_G(s_1,t)\\
          & \leq & (1+12\kappa)\cdot \delta_G(s,t)
 \end{eqnarray*}

 Part 2: As above let $p$ be any endpoint of $\C$ within $s(\ell)$, and let $\P(p)$ be the quickest path from $s$ to $t$
 using the segment $|sp|$.

 Consider the set $\D_1$ as described in Section~\ref{ssec:constructing_candidates} and assume without loss of generality that $p$ lies in a cone $X$. If there exists a point $q$ in $\D_1$ such that $q=p$ then we are done since $|sq| + \delta_G(q,t) = wt(\P(p))$.

 Otherwise, there must exist another point $q$ in $\D_1$ such that $q \in X$ and whose orthogonal projection $q'$ onto the bisector of $X$ is
 closer to $s$ than the orthogonal projection $p'$ of $p$ onto the bisector of $X$.
 \begin{eqnarray*}
    wt(\P(q)) & \leq & |sq|+|qp|+\delta_G(p,t) \\
             & \leq & |sq| + |qq'|+ |q'p'|+|p'p| +\delta_G(p,t) \\
             & \leq & |sq'|(\frac {1}{\cos \kappa}+\frac {\sin \kappa}{\cos \kappa})+|q'p'|+|sp|\cdot \sin \kappa +\delta_G(p,t)\\
             &  <   & (|sq'|+|q'p'|)(\frac {1}{\cos \kappa}+\frac {\sin \kappa}{\cos \kappa}) +|sp|\cdot \sin \kappa +\delta_G(p,t)\\
             & \leq & |sp| (\frac {1}{\cos \kappa}+\frac {\sin \kappa}{\cos \kappa}+ \sin \kappa)+\delta_G(p,t)\\
             &  <   & \frac {|sp|}{\cos \kappa} (1+ 2\sin \kappa)+\delta_G(p,t)\\
                    &  <   & |sp| \cdot (1+3\cot \kappa)+\delta_G(p,t)\\
                    &  <   & (1+3\kappa) \cdot wt(\P(p))
 \end{eqnarray*}
 In the last step we used that $\kappa=\eps/18<2\pi/9$.

 Now we can combine the two results as follows.
 \begin{eqnarray*}
 wt(\P_1) & \leq & wt(\P(q)) \\
          & \leq & (1+12\kappa) \cdot \delta_G(p,t)  \hspace{2.65cm} \mathrm{(from~Part~1)}\\
          & \leq & (1+3\kappa)\cdot (1+12\kappa) \cdot \delta_G(s,t) \hspace{1cm} \mathrm{(from~Part~2)} \\
          & < &  (1+\eps) \cdot \delta_G(s,t) \hspace{3.1cm} \mathrm{(since~}\kappa=\eps/18\mathrm{~and~}\eps<1)
 \end{eqnarray*}
 This completes the proof of the lemma.
\hfill $\square$ \end{proof}

\begin{figure} [h]
   \begin{center}
      \includegraphics[width=11cm]{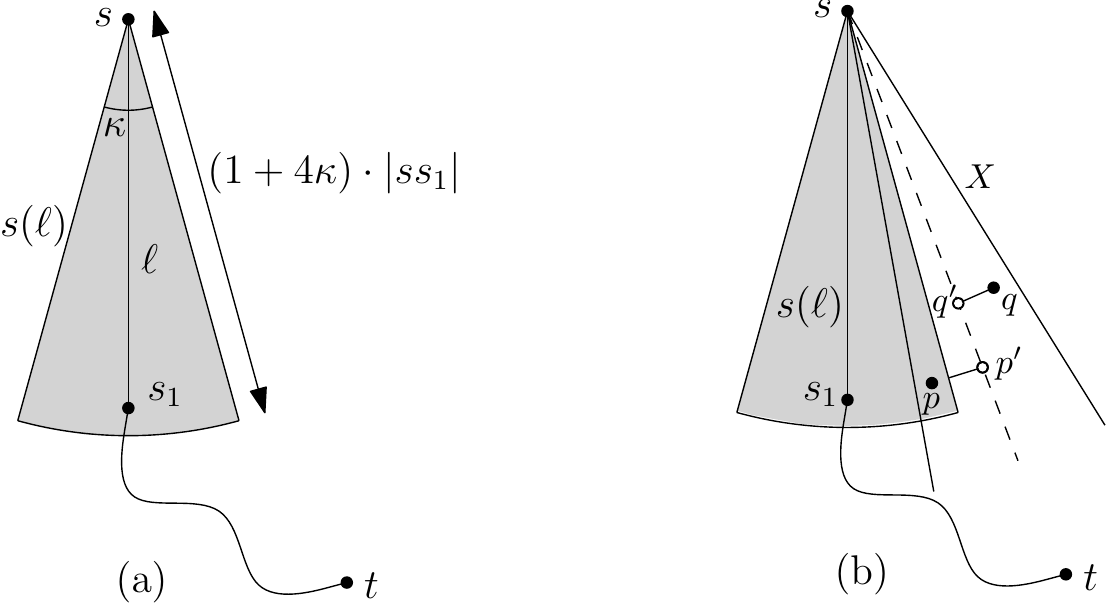}
      \caption{(a) Illustrating the definition of $s(\ell)$. (b) Illustrating the setting in Lemma~\ref{lem:apx_bound2}.} \label{fig:apx_bound1}
   \end{center}
 \end{figure}


\begin{lemma} \label{lem:apx_bound2}
 Let $\eps<1$ be positive constants. If $s(\ell)\cap \S_{\C} = \emptyset$ then:
    $$wt(\P) \leq wt(\P_2) \leq (1+\eps) \cdot wt(\P).$$
\end{lemma}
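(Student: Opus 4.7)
Since $\P$ is optimal the inequality $wt(\P)\le wt(\P_2)$ is immediate, so I focus on $wt(\P_2)\le(1+\eps)\cdot wt(\P)$. Because $wt(\P_2)$ is defined as the minimum cost over the triples produced at query time from $\D_2$, it suffices to exhibit a single triple $[c^*,p^*,s^*]\in\D_2$ whose candidate path costs at most $(1+\eps)\cdot wt(\P)$. Under the hypothesis $s(\ell)\cap\S_\C=\emptyset$ the point $s_1$ lies in the interior of some road $c_1\in\C$, so Property~\ref{Property2} applies and $\ell$ meets $c_1$ at angle $\arccos(\alpha_1)$. Let $\C_{i,j}$ be the bucket containing $c_1$; without loss of generality $\ell$ approaches $c_1$ from the side consistent with $\gamma_{up}(i,j)$. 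My witness will be the triple produced by the ray $r_{up}$ fired from $s$ in direction $\gamma_{up}(i,j)$.

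The first technical step is a direction-approximation lemma: the angle between $\gamma_{up}(i,j)$ and the true direction of $\ell$ is at most $\kappa/2=\eps/36$. The orientation of $c_1$ differs from the representative orientation of $\C_{i,j}$ by at most $\theta\alpha_{\min}=\eps\alpha_{\min}/18$, and the weight $\alpha_1$ lies in an interval of multiplicative length $(1+\eps)$, which after passing through $\arccos$ (whose derivative is bounded on $[\alpha_{\min},\alpha_{\max}]$) contributes a further $O(\eps)$ of angular error. The parameter choice $\theta=\eps/18$ is calibrated so that the two errors together stay below $\kappa/2$. Consequently the initial portion of $r_{up}$ up to radius $(1+4\kappa)|ss_1|$ lies inside the sector $s(\ell)$; the hypothesis then guarantees that the first road of $\C_{i,j}$ intersected by $r_{up}$, call it $c^*$, is hit in its interior at some point $p^*\in s(\ell)$. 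I set $s^*=N(p^*,c^*)$, so $[c^*,p^*,s^*]\in\D_2$ by construction.

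I would then bound the candidate cost
$$\Psi\;=\;|sp^*|\;+\;\alpha^*|p^*s^*|\;+\;M[s^*,t].$$
The simplest upper bound on $M[s^*,t]=d_\T(s^*,t)$ is obtained by exhibiting the explicit route ``$s^*\to s_1$ in a straight line at unit speed, then follow $\P$ from $s_1$'', giving $M[s^*,t]\le|s^*s_1|+d_\T(s_1,t)=|s^*s_1|+wt(\P)-|ss_1|$. The lemma thus reduces to bounding the excess
$$E\;=\;|sp^*|+\alpha^*|p^*s^*|+|s^*s_1|-|ss_1|\;\le\;\eps\cdot wt(\P).$$
Because $p^*$ lies inside the thin sector $s(\ell)$, an elementary trigonometric computation gives $|sp^*|+|p^*s_1|-|ss_1|\le O(\kappa)|ss_1|$, so after decomposing $|s^*s_1|\le|s^*p^*|+|p^*s_1|$ what remains is a contribution proportional to $|p^*s^*|$.

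The main obstacle is precisely controlling $|p^*s^*|$, the distance from $p^*$ to the next vertex of $G$ on $c^*$, which is not \emph{a priori} small. My plan is to exploit the structure of $G$: every vertex on $c^*$ is either an endpoint of $c^*$ or a projection $PR_f(\cdot,c^*)$, and by applying Property~\ref{Property3} in a time-reversed form to the optimal sub-path from $p^*$, any interior exit point from $c^*$ must coincide with one of these projection vertices. Hence $s^*=N(p^*,c^*)$ is the earliest point at which the optimum from $p^*$ can leave $c^*$, so $\alpha^*|p^*s^*|+M[s^*,t]=d_\T(p^*,t)$ whenever the optimum from $p^*$ uses $c^*$ at all. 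In the residual case where the optimum from $p^*$ bypasses $c^*$, the waste is bounded by $2|p^*s^*|$ via the detour $s^*\to p^*\to t$; here I exploit that $c^*$ and $c_1$ share bucket (orientations within $\theta\alpha_{\min}$, weights within factor $(1+\eps)$) together with the absence of road endpoints in $s(\ell)$ to force $s^*$ close to the segment $ss_1$ and conclude $|p^*s^*|=O(\kappa)|ss_1|$. Summing the $O(\kappa)$ slack terms and using $\kappa=\eps/18$ yields $\Psi\le(1+\eps)\cdot wt(\P)$, finishing the proof.
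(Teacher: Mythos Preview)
Your setup, the direction-approximation step, and the treatment of the case ``the optimum from $p^*$ follows $c^*$'' are all in line with the paper's argument (the latter subsumes the paper's Case~1, where $c^*=c_1$). The gap is in your ``residual case'', which is precisely the paper's Case~2 ($c^*\neq c_1$). Your conclusion there, $|p^*s^*|=O(\kappa)|ss_1|$, does not follow from the two facts you invoke. Knowing that $c^*$ and $c_1$ share a bucket and that no road endpoints lie in $s(\ell)$ only tells you that $c^*$ crosses the thin sector completely; it says nothing about where the next \emph{graph vertex} on $c^*$ sits. The vertex $s^*=N(p^*,c^*)$ can be an endpoint of $c^*$ or a projection $PR_f(\cdot,c^*)$, and either may lie arbitrarily far along $c^*$ compared to $|ss_1|$: take $c^*$ long, crossing $ss_1$ close to $s$, with no other road's endpoint projecting onto the relevant portion of $c^*$. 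Then the detour $2|p^*s^*|$ you charge can dominate $wt(\P)$, and the bound collapses.

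The paper does not try to bound $|p^*s^*|$ at all. Instead it \emph{constructs} a short path that does travel along $c^*$, thereby upper-bounding $\alpha^*|p^*s^*|+M[s^*,t]$ directly. The key structural observation is that, since the endpoints of $c^*$ lie outside $s(\ell)$ while $c^*$ crosses it, the endpoint $v^*$ of $c^*$ lies beyond $p^*$ in the road direction, and the graph $G$ contains the edge $(v^*,PR_f(v^*,c_1))$ from Step~5 of the construction. Writing $q'=v^*$ and $q=PR_f(v^*,c_1)$, the path $s\to p^*\to q'\to q\to t_1\to\cdots$ is available to the query (because $q'$ is a vertex on $c^*$ past $p^*$, hence at or past $s^*$). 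The paper then does a trigonometric computation, using that $c^*$ and $c_1$ have orientations within $\theta\alpha_{\min}$, to show that travelling along $c^*$ to $q'$ and hopping to $q$ costs at most $(1+\eps)$ times the direct cost $|rs_1|+\alpha_1|s_1q|$ along $\P$. This is the missing ingredient: you need the graph edge between $c^*$ and $c_1$, not a bound on the spacing of vertices along $c^*$.
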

\ifproceedings
 \begin{proof} The proof can be found in Appendix~A. \hfill $\square$ \end{proof}
\else
\begin{proof}
 As above let $(s,s_1)$ be the first segment of $\P$, where $s_1$ lies on a road $c_1$. Assume w.l.o.g. that $c_1$ is belongs to the set of roads $\C_{ij}$ as defined in Section~\ref{ssec:constructing_candidates}(Type 2). Rotate the input such that $c_1$ is horizontal, below $s$ and going from right to left. Consider the construction of the $3$-tuples in Type~2, and let $[c'_1,s'_1,p]$ be the $3$-tuple reported when $\C_{ij}$ was processed in the direction towards $c_1$. See Fig.~\ref{fig:apx_bound2a}(b).

Consider the shortest path from $s$ to $t$ using the segment $(s,s'_1)$. We will have two cases depending on $s'_1$: either (1) $s_1$ and $s'_1$ lie on the same road, or (2) they lie on different roads.

Case 1: If $s'_1$ and $s_1$ both lie on road $c_1$ (with cost function $\alpha_1$) then we have:
\begin{eqnarray*}
 wt(\P_2) & \leq & |ss'_1| + \alpha_1 \cdot |s'_1,t_1| + \delta_G(t_1,t) \\
          & \leq & |ss'_1| + \alpha_1 \cdot |s'_1s_1| + \alpha_1 \cdot |s_1t_1| + \delta_G(t_1,t) \\
          & \leq & |ss'_1| + \alpha_1 \cdot |s'_1s_1| + \delta_G(s_1,t) \\
          & \leq & (1+4\kappa)\cdot |ss_1| + 5\kappa \cdot |ss_1| + \delta_G(s_1,t) \quad \mathrm{(since~}\theta \leq \kappa) \\
          & \leq & (1+9\kappa)\cdot |ss_1| + \delta_G(s_1,t) \\
          & \leq & (1+9\kappa)\cdot \delta_G(s,t) \\
          & < & (1+\eps) \cdot \delta_G(s,t)
\end{eqnarray*}
That completes the first part.

Part 2: If $s'_1$ and $s_1$ lie on different roads $c'_1$ and $c_1$, respectively, then $c'_1$ must lie between $s$ and $c_1$. This follows from the fact that $s(\ell)$ does not contain any endpoints and $c'_1$ is the first road hit. Furthermore, there must exist a an edge $(q',q)\in E$ such that $q'$ lies on $c'_1$ to the left of $s'_1$ and $q$ lies on $c_1$ to the right of $t_1$ and $\angle (u_1,q,q')=\arccos \alpha_1$.  See Fig.~\ref{fig:apx_bound2a}(b) for an illustration of case 2(a).

Consider the situation as depicted in Fig.~\ref{fig:apx_bound2a}(b). We will prove that the cost of the path from $s$ to $t_1$ via $s'_1, q'$ and $q$ is almost the same as the cost of the optimal path from $s$ to $t_1$ (that goes via $s_1)$. Recall that the cost function of $c_1$ and is $\alpha_1$ and the cost function of $c'_1$ is $\alpha'_1$. Furthermore, let $r$ be the intersection point between $c'_1$ and $(s,s_1)$.

Note that the cost of the path from $r$ to $q$ via $q'$ is maximized if $rq'$ forms an angle of $\theta$ with the horizontal line and $q'$ lies above $r$, as shown in Fig.~\ref{fig:apx_bound2a}(b). Furthermore, $|qq'|=|q'p|+|pq|$ and $|pq|=|rs_1|$.

\begin{eqnarray*}
 \alpha'_1 \cdot |rq'| + |q'p|  & \leq & \alpha'_1 \cdot |pr| \cdot \cos \theta + |pr| \cdot \sin \theta \\
         & < & |pr| (\alpha'_1 + \eps\cdot \alpha_{\min}/18) \\
         & < & \alpha'_1 \cdot |pr| (1 + \eps)
\end{eqnarray*}
Putting together the bounds we get:

\begin{eqnarray*}
 wt(\P_2) & \leq & |ss'_1| + \alpha'_1 \cdot |s'_1q'| + |q'q| + \delta_G(q,t) \\
          & \leq & |ss'_1| + \alpha'_1 \cdot |s'_1r| + \alpha'_1 \cdot |rq'| + (|q'p|+|pq|) +  \delta_G(q,t) \\
          & \leq & (1+9\kappa) \cdot |sr| + \alpha'_1 \cdot |rq'| + |q'p| +|rs_1| +  \delta_G(q,t) \quad \mathrm{(see~Part~1)} \\
          & \leq & (1+9\kappa) \cdot |sr| + |rs_1| + \alpha'_1 \cdot |pr| (1 + \eps) +  \delta_G(q,t) \\
             & < & (1+\eps) \cdot (|sr| + |rs_1| + \alpha'_1 \cdot |s_1q|) +  \delta_G(q,t)\\
             & < & (1+\eps) \cdot wt(\P)
\end{eqnarray*}
This completes the proof of Lemma~\ref{lem:apx_bound2}.
\hfill $\square$ \end{proof}

 \begin{figure} [h]
   \begin{center}
      \includegraphics[width=14cm]{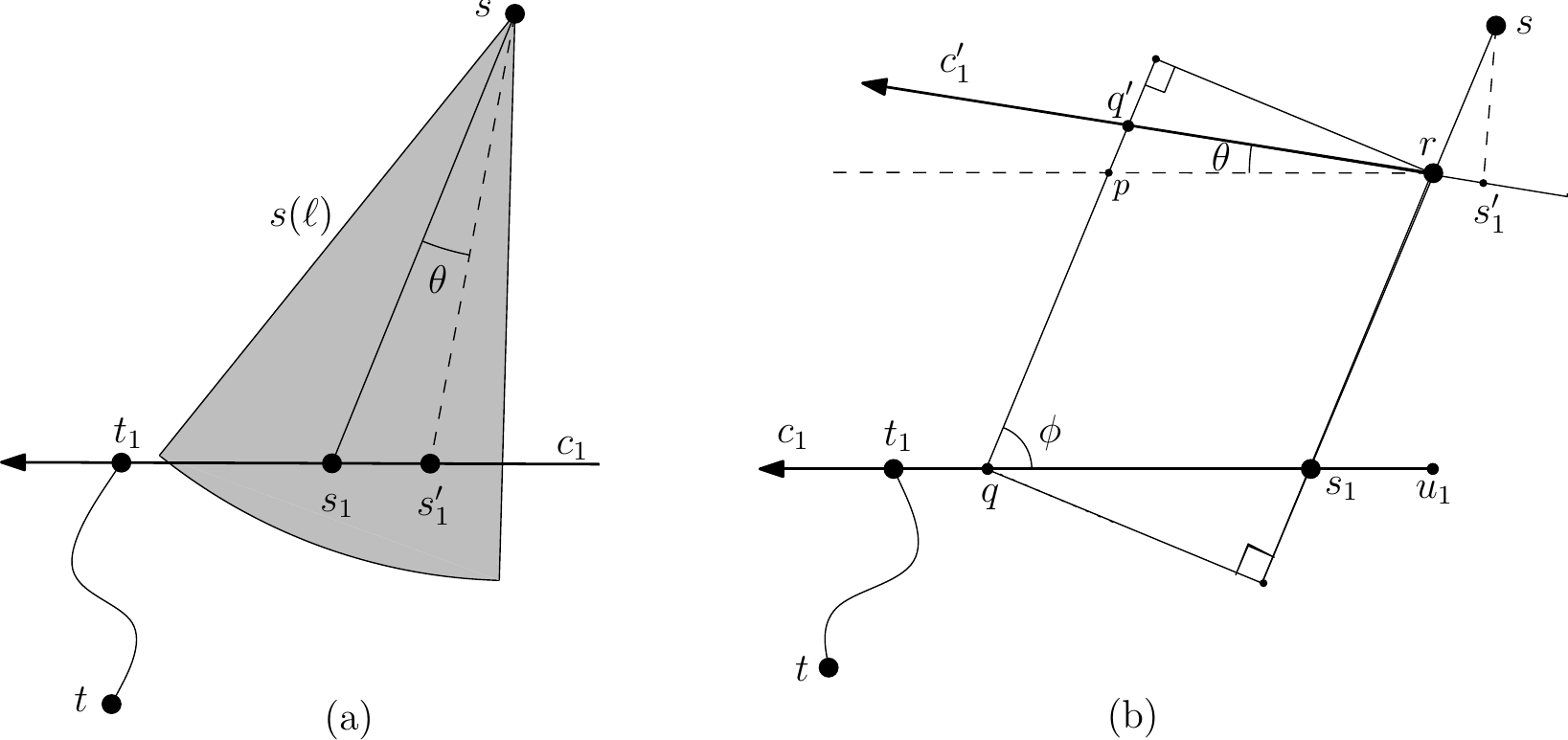}
      \caption{Illustrating (a) Case~1 and Case~2 (b) in the proof of Lemma~\ref{lem:apx_bound2}.} \label{fig:apx_bound2a}
   \end{center}
 \end{figure}
\fi

We can summarize this section (Lemmas ~\ref{lem:apx_bound1} and
~\ref{lem:apx_bound2}, Theorems ~\ref{thm:preprocessing} and
~\ref{thm:query}) with the following theorem:

\begin{theorem} \label{thm:total_time_s}
Given a transportation network $\T$ with $n$ roads in the
Euclidean plane, a destination point $t$ and a positive constant
$\eps$, one can preprocess $\T$ and $t$ in $O(n^2 \log n)$ time
and $O(n^2)$ space such that given a query point $s$, a
$(1+\eps)$-approximate quickest path can be calculated in
$O(\frac{1}{\alpha_{\min}\eps^2} \cdot \log n \log_{1+\eps}
\frac{\alpha_{\max}}{\alpha_{\min}})$ time.
\end{theorem}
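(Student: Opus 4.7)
The plan is to assemble the theorem as a direct combination of the four earlier results stated in the section: Theorem~\ref{thm:preprocessing} for the preprocessing bounds, Theorem~\ref{thm:query} for the query time, and Lemmas~\ref{lem:apx_bound1} and~\ref{lem:apx_bound2} for the approximation guarantee. Nothing new needs to be proved; the task is to verify that the algorithm described in Section~\ref{ssec:preprocessquery} actually returns a $(1+\eps)$-approximation under every configuration of the input.

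First I would state the preprocessing step. By Theorem~\ref{thm:preprocessing}, the graph $G$ on $\T$ and $t$ together with the shortest-path matrix $M$ can be built in $O(n^2 \log n)$ time and stored in $O(n^2)$ space, and this subsumes the auxiliary structures needed by Lemmas~\ref{lem:type1_set} and~\ref{lem:type2_set} (the cone-based nearest-projection structure on $\S_\C$ and the trapezoidal maps on the classes $\C_{i,j}$), each of which needs only $O(n \log n)$ preprocessing and $O(n/\eps)$ or $O(n)$ space. Hence the stated preprocessing bounds hold.

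Next I would handle the query. Given $s$, the algorithm of Section~\ref{ssec:query_complexity} computes $\D_1$ and $\D_2$, evaluates $\P_1$ and $\P_2$ as described, and reports the minimum of $\P_1$, $\P_2$ and the direct segment $(s,t)$. By Theorem~\ref{thm:query} the total query time is $O(\frac{1}{\alpha_{\min}\eps^2}\cdot \log n \log_{1+\eps} \frac{\alpha_{\max}}{\alpha_{\min}})$, matching the claim. For correctness, fix an optimal path $\P$ satisfying Properties~\ref{Property1}--\ref{Property3} with first segment $\ell=(s,s_1)$, and consider the sector $s(\ell)$ of Section~\ref{ssec:approximation_bound}. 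Two cases arise. If $s(\ell)\cap \S_\C \neq \emptyset$, then Lemma~\ref{lem:apx_bound1} gives $wt(\P_1)\leq (1+\eps)\,wt(\P)$. Otherwise $s(\ell)\cap \S_\C = \emptyset$, and Lemma~\ref{lem:apx_bound2} gives $wt(\P_2)\leq (1+\eps)\,wt(\P)$. Since the reported path has cost at most $\min(wt(\P_1), wt(\P_2), |st|)$, in either case it is a $(1+\eps)$-approximation of $wt(\P)$.

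I do not anticipate a genuine obstacle: the only thing to be careful about is that the case dichotomy used in the correctness argument exactly matches the hypotheses of Lemmas~\ref{lem:apx_bound1} and~\ref{lem:apx_bound2}, so every optimal $\P$ is covered by one of the two bounds, and that the trivial case $s_1=t$ is handled by including the direct segment $(s,t)$ among the candidates. With that observation the theorem follows immediately by concatenating the cited bounds.
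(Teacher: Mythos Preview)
Your proposal is correct and mirrors the paper's own treatment: the theorem is stated there simply as a summary of Lemmas~\ref{lem:apx_bound1} and~\ref{lem:apx_bound2} together with Theorems~\ref{thm:preprocessing} and~\ref{thm:query}, with no additional argument. Your added remarks about the auxiliary structures being dominated by the $O(n^2\log n)$ bound and about the direct segment $(s,t)$ covering the trivial case $s_1=t$ are welcome clarifications but do not depart from the paper's approach.
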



\subsection{General case}~\label{sec:generalization}

In this section we turn our attention to the query version when we
are given two query points $s$ and $t$ in $\Reals^2$ and our goal
is to find a quickest path between $s$ and $t$ among $\C$. The
idea is the same as in the previous section. That is we perform
the exact same preprocessing steps as in the previous section
(omitting the destination point $t$), but with the exception that
$M$ contains all-pair shortest distances. Using Johnson's
algorithm~\cite{j-easps-77} the all-pairs shortest paths can be
computed in $O(n^4 \log n)$ using $O(n^4)$ space.

Given a query we compute the two candidate sets of type 1 and type 2 for both $s$ and $t$. For each pair of elements $p \in \D_1(s) \cup \D_2(s)$ and $q \in \D_1(t) \cup \D_2(t)$ compute a path between $s$ and $t$ as follows:
 \begin{description}
   \item[] if $p \in \D_1$ and $q\in \D_1$ then $|sp|+M[p,q]+|qt|$
   \item[] if $p \in \D_1$ and $q=[c, s_1, s_2]\in \D_2$ then $|sp|+M[p,s_2]+\alpha(c) \cdot |s_2s_1|+|s_1s|$
   \item[] if $p=[c, s_1, s_2] \in \D_2$ and $q\in \D_1$ then $|ss_1|+\alpha(c) \cdot |s_1s_2|+M[s_2,q]+|qt|$
   \item[] if $p=[c_1, s_1, s_2] \in \D_2$ and $q=[c_2, t_1, t_2]\in \D_2$ then $|ss_1|+\alpha(c_1) \cdot |s_1s_2|+M[s_2,t_2]+\alpha(c_2) \cdot |t_2t_1|+|t_1t|$
 \end{description}
Note that for a specific pair $p,q$ the calculated distance might not be a good approximation of the actual distance. However, for the shortest path there exists a pair $p,q$ such that the distance is a good approximation.

By putting together the results, we obtain the following theorem:

\begin{theorem} \label{thm:main_slow}
  Given a transportation network $\T$ with $n$ roads in the Euclidean plane and a positive constant $\eps$, one can preprocess $\T$ in $O(n^4\log n)$ time using $O(n^4)$ space such that given two query points $s$ and $t$ a $(1+\eps)$-approximate quickest path between $s$ and $t$ can be calculated in $O((\frac{1}{\alpha_{\min} \cdot \eps^2} \cdot \log_{1+\eps} \frac{\alpha_{\max}}{\alpha_{\min}})^2 \cdot \log n)$ time.
\end{theorem}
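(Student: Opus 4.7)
The plan is to assemble the theorem from three ingredients: (i) a preprocessing bound obtained by combining graph construction, all-pairs shortest paths, and the two candidate-set data structures of Lemmas~\ref{lem:type1_set} and~\ref{lem:type2_set}; (ii) a query bound obtained by enumerating all pairs of candidates for the first and last segments of the quickest path; and (iii) an approximation bound obtained by applying Lemmas~\ref{lem:apx_bound1} and~\ref{lem:apx_bound2} symmetrically at both $s$ and $t$.

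\textbf{Preprocessing.} First I would build the graph $G(V,E)$ of size $O(n^2)$ on input $\T(S,\C)$ exactly as in Section~\ref{sec:Buildgraph}, but without inserting the source and destination vertices (since these are now query points). Running Johnson's algorithm~\cite{j-easps-77} on $G$ (after swapping edge directions where appropriate, so that distances can be read off in both directions) computes all-pairs shortest path distances in $O(|V|^2 \log |V| + |V|\cdot|E|) = O(n^4 \log n)$ time using $O(n^4)$ space for the matrix $M$. In parallel I would build the type~1 data structure of Lemma~\ref{lem:type1_set} on $\S_\C$ and the type~2 trapezoidal maps of Lemma~\ref{lem:type2_set} on the classes $\C_{i,j}$; both fit inside the $O(n^2 \log n)$ budget already spent.

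\textbf{Query.} Given query points $s$ and $t$, I would invoke Lemmas~\ref{lem:type1_set} and~\ref{lem:type2_set} twice to obtain the four sets $\D_1(s), \D_2(s), \D_1(t), \D_2(t)$, each of size $O(\frac{1}{\alpha_{\min}\eps^2}\log_{1+\eps}\frac{\alpha_{\max}}{\alpha_{\min}})$ and computable in $O(\frac{1}{\alpha_{\min}\eps^2}\log n \log_{1+\eps}\frac{\alpha_{\max}}{\alpha_{\min}})$ time. I would then loop through the Cartesian product, computing for each pair $(p,q)$ the candidate cost using one of the four formulas listed right before the theorem statement; each evaluation is a single lookup in $M$ plus $O(1)$ arithmetic. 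The total number of pairs is $O\!\bigl((\tfrac{1}{\alpha_{\min}\eps^2}\log_{1+\eps}\tfrac{\alpha_{\max}}{\alpha_{\min}})^2\bigr)$, and adding the time to build the candidate sets yields the query time claimed.

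\textbf{Approximation.} The key point is to show that among all these pairs, one of them corresponds to a $(1+\eps)$-approximate quickest path. Fix an optimal path $\P$ satisfying Properties~\ref{Property1}-\ref{Property3}, with first segment $(s,s_1)$ and last segment $(t_k,t)$. Applying Lemma~\ref{lem:apx_bound1} (if the sector $s(\ell)$ around the first segment contains a road endpoint) or Lemma~\ref{lem:apx_bound2} (otherwise) to the $s$-side guarantees that some $p^* \in \D_1(s)\cup \D_2(s)$ serves as a good entry into $G$, losing at most a factor of $(1+\eps/2)$ with respect to $wt(\P)$ once we replace $\eps$ by $\eps/2$ in the constructions. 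The symmetric argument applied at $t$ (obtained by reversing the optimal path and observing that the three properties are direction-agnostic) gives a good exit vertex $q^* \in \D_1(t)\cup \D_2(t)$; the only subtlety is that for type~2 we must reverse the roles of the ``forward'' and ``backward'' projections, which is why $\D_2$ is defined symmetrically via both $\gamma_{up}$ and $\gamma_{down}$. The matrix $M$ stores the exact graph distance between the corresponding interior vertices, so the enumerated path through $(p^*,q^*)$ costs at most $(1+\eps/2)^2 \cdot wt(\P) \leq (1+\eps) \cdot wt(\P)$; the minimum over all pairs is at least as good.

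\textbf{Main obstacle.} The delicate step is the symmetric $t$-side approximation: one must verify that Lemmas~\ref{lem:apx_bound1} and~\ref{lem:apx_bound2} carry over under path reversal. This requires re-examining the definition of $N(p,c)$ and the ``forward'' projections near $t_k$: because Property~\ref{Property2} fixes the angle on the incoming rather than outgoing side of a road, the reversed version of the type~2 construction naturally uses $PR_b$ and the rays $\gamma_{down}$, which are already present in $\D_2$. Once this symmetry is checked, the remaining bookkeeping is routine, and collecting the preprocessing, query, and approximation bounds yields the theorem.
\hfill $\square$
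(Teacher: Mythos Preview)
Your proposal is correct and follows essentially the same route as the paper: build $G$ without $s,t$, compute all-pairs shortest paths via Johnson's algorithm, query by forming the candidate sets $\D_1,\D_2$ at both endpoints and enumerating the Cartesian product with the four cost formulas. You are in fact more explicit than the paper on two points it glosses over---the $\eps\mapsto\eps/2$ rescaling to absorb the $(1+\eps/2)^2$ factor, and the need to check that Lemmas~\ref{lem:apx_bound1}--\ref{lem:apx_bound2} apply symmetrically at the $t$-side under path reversal---so the plan is sound.
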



\subsection{Improving the complexity using the well-separated pair decomposition} \label{ssec:wspd}

The bottleneck of the preprocessing algorithm is the fact that $n^4$ shortest paths are computed in a graph of quadratic complexity. Is there a way to get around this? Since it suffices to approximate the shortest paths we can reduce the number of shortest path queries from $O(n^4)$ to $O(n^2)$, that is, linear in the number of vertices, using the well-separated pair-decomposition (WSPD).

\ifproceedings
 The definition of the WSPD and some basic properties can be found in Appendix~B.
\else
\begin{definition} [\cite{ck-dmpsa-95}]   \label{def:wellsep}
       Let $s>0$ be a real number, and let $A$ and $B$ be two finite
       sets of points in $\Reals^d$. We say that $A$ and $B$ are
       \emph{well-separated} with respect to $s$, if there are two
       disjoint $d$-dimensional balls $C_A$ and $C_B$, having the
       same radius, such that
       (i) $C_A$ contains the bounding box $R(A)$ of $A$,
       (i) $C_B$ contains the bounding box $R(B)$ of $B$, and
       (ii) the minimum distance between $C_A$ and $C_B$ is at least
       $s$ times the radius of $C_A$.
\end{definition}

The parameter $s$ will be referred to as the {\em separation constant}. The next lemma follows easily from Definition~\ref{def:wellsep}.

\begin{lemma} [\cite{ck-dmpsa-95}]  \label{lem:insamepair}
       Let $A$ and $B$ be two finite sets of points that are
       well-separated w.r.t.\ $s$, let $x$ and $p$ be points of $A$,
       and let $y$ and $q$ be points of $B$. Then
       (i) $|xy| \leq (1+4/s) \cdot |pq|$, and
       (ii) $|px| \leq (2/s) \cdot |pq|$.
\end{lemma}

\begin{definition}[\cite{ck-dmpsa-95}]  \label{def:WSPD}
       Let $S$ be a set of $n$ points in $\Reals^d$, and let $s>0$
       be a real number. A {\em well-separated pair decomposition}
       (WSPD) for $S$ with respect to $s$ is a sequence of pairs
       of non-empty subsets of~$S$,
       $(A_1,B_1) ,  \ldots , (A_m, B_m)$,
       such that
       \begin{enumerate}
       \item $A_i \cap B_i = \emptyset$, for all $i=1, \ldots, m$,
       \item for any two distinct points $p$ and $q$ of
             $S$, there is exactly one pair $(A_i,B_i)$
             in the sequence, such that
             (i) $p \in A_i$ and $q \in B_i$,
             or (ii) $q \in A_i$ and $p \in B_i$,
       \item $A_i$ and $B_i$ are well-separated w.r.t.\ $s$,
             for $1\leq i \leq m$.
       \end{enumerate}
       The integer $m$ is called the {\em size} of the WSPD.
\end{definition}

Callahan and Kosaraju showed that a WSPD of size $m = \O(s^dn)$ can be computed in $\O(s^dn+n \log n)$ time.
\fi

Construct the graph $G(V,E)$ of $\T$, as defined in Section~\ref{sec:Buildgraph}. Compute a WSPD $\{(A_i,B_i)\}_{i=1}^k$ of $V$ with respect to a separation constant $s=\frac {8} {\tau\cdot \alpha_{\min}}$, where $\tau<1$ is a positive constant given as input. Then for each well-separated pair $(A_i,B_i)$ pick two arbitrary points $a\in A_i$ and $b\in B_i$ as representative points, and calculate the shortest path in $G$ between $a$ and $b$. All paths are stored in a matrix $M'$. According to Definition~\ref{def:WSPD}, we have $O(n^2)$ well separated pairs. It follows that the number of the shortest path queries in $M$ is $O(n^2)$.

The queries are performed in almost the same way as above. The only difference is how the cost of the path between two points is calculated. Assume that we want the minimum transportation distance between two query points $p$ and $q$. According to Definition~\ref{def:WSPD} there exists a well-separated pair $\{A,B\}$ such that $p\in A$ and $q \in B$, or vice versa. Furthermore, let $r_A$ and $r_B$ be the representative points of $A$ and $B$, respectively. Instead of using the value in $M[p,q]$ we approximate the cost by $|pr_A|+M'[r_A,r_B]+|r_Bq|$.

\begin{theorem} \label{thm:wspd_bound}
  $$\delta_G(p,q) \leq  |pr_A|+M'[r_A,r_B]+|r_Bq| \leq (1+\tau) \cdot \delta_G(p,q).$$
\end{theorem}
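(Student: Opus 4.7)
The plan is to split the claim into its two inequalities and attack them separately. The lower bound will be immediate from the triangle inequality applied to a concrete concatenated path, while the upper bound will rely on the two defining properties of well-separated pairs plus a conversion from Euclidean distance to transportation distance using~$\alpha_{\min}$.

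For the lower bound, I would observe that because $p,q,r_A,r_B$ are all vertices of $G$, Observation~\ref{obs:shortestpath} lets me identify $M'[r_A,r_B]$ with the transportation distance from $r_A$ to $r_B$. A feasible traveller's route from $p$ to $q$ is then obtained by walking at unit speed from $p$ straight to $r_A$ (cost $|pr_A|$), following the optimal transportation path from $r_A$ to $r_B$ (cost $M'[r_A,r_B]$), and walking at unit speed from $r_B$ to $q$ (cost $|r_B q|$). Since $\delta_G(p,q)$ is the minimum cost of any such route, the first inequality follows.

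For the upper bound, the plan is to use Lemma~\ref{lem:insamepair}. Since $p,r_A\in A$ and $q,r_B\in B$ with $A,B$ well-separated w.r.t.\ $s$, that lemma yields $|pr_A|\leq (2/s)|pq|$ and $|qr_B|\leq (2/s)|pq|$. Applying the triangle inequality for $\delta_G$ along the chain $r_A\to p\to q\to r_B$ (unit-speed walks again being legal paths) gives
\[
    M'[r_A,r_B] \;=\; \delta_G(r_A,r_B) \;\leq\; |r_A p| + \delta_G(p,q) + |q r_B| \;\leq\; \delta_G(p,q) + (4/s)|pq|.
\]
Summing the three bounds produces $|pr_A|+M'[r_A,r_B]+|r_Bq|\leq \delta_G(p,q) + (8/s)|pq|$.

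The final step is to replace $|pq|$ by $\delta_G(p,q)$. Since every edge of $G$ has weight at least $\alpha_{\min}$ times its Euclidean length (the unit-speed segments outside the network have weight exactly equal to the Euclidean length, which is $\geq \alpha_{\min}$ times it), any $G$-path from $p$ to $q$ has Euclidean length at most $\delta_G(p,q)/\alpha_{\min}$, and in particular $|pq|\leq \delta_G(p,q)/\alpha_{\min}$. Plugging in $s=8/(\tau\alpha_{\min})$ gives $(8/s)|pq|\leq \tau\,\delta_G(p,q)$, hence the upper bound $(1+\tau)\,\delta_G(p,q)$. The only real subtlety is the $|pq|$-to-$\delta_G(p,q)$ conversion, which is where the $\alpha_{\min}$ in the definition of $s$ gets absorbed; everything else is triangle inequality and the standard WSPD inequalities.
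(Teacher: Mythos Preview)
Your argument is correct and mirrors the paper's proof almost line for line: the paper also dismisses the left inequality as immediate, bounds $\delta_G(r_A,r_B)\leq |r_Ap|+\delta_G(p,q)+|qr_B|$ via the triangle inequality, applies Lemma~\ref{lem:insamepair} to get the $(8/s)|pq|$ term, and finishes with $|pq|\leq \delta_G(p,q)/\alpha_{\min}$ together with $s=8/(\tau\alpha_{\min})$. The only cosmetic difference is that you spell out the justification for the $|pq|\to\delta_G(p,q)$ conversion, whereas the paper states it in one line.
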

\begin{proof}
  The left inequality is immediate, thus we will focus on the right inequality.
  \begin{eqnarray*}
     |pr_A|+M'[r_A,r_B]+|r_Bq| &  =   & |pr_A|+\delta_G(r_A,r_B) +|r_Bq|\\
                               & \leq & 2|pr_A|+2|r_Bq|+\delta_G(p,q) \\
                               & \leq & 4\cdot 2/s \cdot |pq| + \delta_G(p,q) \\
                               & \leq & \tau \cdot \alpha_{\min} \cdot |pq| + \delta_G(p,q) \\
                               & \leq & (1+\tau) \cdot \delta_G(p,q)
\end{eqnarray*}
Where the last step follows by using $|pq| \leq \delta_G(p,q)/\alpha_{\min}$.
\hfill $\square$ \end{proof}

Given a positive constant $\sigma<1$ we can obtain the claimed bounds by setting the constants appropriately (for example $\tau=\sigma/3$ and $\eps=\sigma/3)$. We get:

\begin{theorem} \label{thm:main_query_result}
  Given a transportation network $\T(S,\C)$ with $n$ roads and a positive constant $\eps$, one can preprocess $\C$ in $O((\frac {n}{\alpha_{\min}\cdot \eps})^2 \log n)$ time using $O((\frac {n}{\alpha_{\min} \cdot \eps})^2)$ space such that given two query points $s$ and $t$ a $(1+\eps)$-approximate quickest path between $s$ and $t$ can be calculated in $O((\frac{1}{\alpha_{\min} \cdot \eps^2} \cdot \log_{1+\eps} \frac{\alpha_{\max}}{\alpha_{\min}})^2 \cdot \log n)$ time.
\end{theorem}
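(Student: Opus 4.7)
The plan is to combine three ingredients: the graph $G$ of Section~\ref{sec:Buildgraph}, the two candidate-set constructions of Section~\ref{ssec:constructing_candidates} applied independently to $s$ and to $t$ as in Section~\ref{sec:generalization}, and the WSPD-based distance oracle justified by Theorem~\ref{thm:wspd_bound}. First I would perform the preprocessing in the order indicated: build $G(V,E)$ on $\T$ in $O(n^2 \log n)$ time (Lemma~\ref{lemma:building_graph}); construct a WSPD of $V$ with separation constant $s = 8/(\tau \alpha_{\min})$ via the Callahan--Kosaraju algorithm; for every pair $(A_i,B_i)$ pick representatives $r_{A_i} \in A_i$, $r_{B_i} \in B_i$ and compute the exact $G$-distance between them with Dijkstra, populating a matrix $M'$. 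In addition, I would preprocess the data structures of Lemmas~\ref{lem:type1_set} and~\ref{lem:type2_set} on $\T$ so that the two candidate sets $\D_1(\cdot),\D_2(\cdot)$ can be extracted for each query endpoint.

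At query time the strategy mirrors Section~\ref{sec:generalization}: build $\D_1(s)\cup\D_2(s)$ and $\D_1(t)\cup\D_2(t)$, iterate over all pairs $(p,q)$ of candidates, and evaluate the same four case expressions from Section~\ref{sec:generalization}, except each lookup $M[u,v]$ is replaced by the WSPD estimate $|u r_{A}| + M'[r_A, r_B] + |r_B v|$ where $\{A,B\}$ is the unique WSPD pair with $u\in A$, $v\in B$ (located in $O(\log n)$ time via the split-tree). Return the minimum among all these estimates together with the direct segment $(s,t)$.

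For the approximation analysis, Lemmas~\ref{lem:apx_bound1} and~\ref{lem:apx_bound2} applied at both endpoints (as in Theorem~\ref{thm:main_slow}) guarantee that, using exact $G$-distances, one of the enumerated candidate pairs yields a path of cost at most $(1+\eps)$ times the optimum. Replacing every exact $G$-distance by the WSPD surrogate inflates that distance by a factor of at most $(1+\tau)$ by Theorem~\ref{thm:wspd_bound}. Composing gives $(1+\eps)(1+\tau)$; setting both $\eps$ and $\tau$ equal to $\sigma/3$ yields $(1+\sigma/3)^2 \leq 1+\sigma$ for $\sigma<1$, after which renaming $\sigma$ as $\eps$ delivers the claimed $(1+\eps)$ guarantee. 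The preprocessing time and space then follow by summing the graph construction, the $O(|V|) = O(n^2)$ bound on the number of WSPD pairs for constant separation $s=\Theta(1/(\tau\alpha_{\min}))$, and the Dijkstra cost for the representative shortest paths; the query time follows by multiplying the per-endpoint candidate counts of Theorems~\ref{thm:query} and combining with the $O(\log n)$ WSPD lookup.

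The main obstacle is checking that the two approximation factors truly compose multiplicatively without disturbing the candidate-enumeration arguments of Lemmas~\ref{lem:apx_bound1}--\ref{lem:apx_bound2}: those proofs bound the surrogate path cost in terms of $\delta_G(\cdot,t)$, and one must verify that a uniform $(1+\tau)$ inflation of those $\delta_G$ values only scales, and does not reshape, the chain of inequalities. This is routine because each inequality uses $\delta_G$ only as an additive term on both sides, so the extra factor can be pulled outside; the parameter choice $\tau=\eps=\sigma/3$ then cleans up the constants.
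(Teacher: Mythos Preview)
Your proposal is correct and follows essentially the same approach as the paper: build $G$, compute a WSPD of $V$ with separation $s=8/(\tau\alpha_{\min})$, store exact $G$-distances only between representative pairs in $M'$, answer queries by forming the candidate sets $\D_1,\D_2$ at both $s$ and $t$ and evaluating the four case expressions of Section~\ref{sec:generalization} with the WSPD surrogate in place of $M$, and finally set $\tau=\eps=\sigma/3$ to merge the two multiplicative losses. The paper's own argument is in fact terser than yours---it simply asserts the parameter choice and points back to the earlier sections---so your added remark about why the $(1+\tau)$ inflation composes cleanly with the chains in Lemmas~\ref{lem:apx_bound1}--\ref{lem:apx_bound2} is a welcome clarification rather than a deviation.
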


Assuming $\alpha_{\min}$ and $\alpha_{\max}$ being constants the
bounds can be rewritten as: preprocessing is $O(n^2 \log n)$,
space is $O(n^2/\eps^2)$ and the query time is $O(1/\eps^4 \cdot
\log n)$.

\section{Concluding Remarks}
We considered the problem of computing a quickest path in a
transportation network. In the static case our algorithm has a
running time of  $O(n^2 \log n)$ which is a linear factor better
than the best algorithm known so far. We also introduce the query
version of the problem.

There are many open problems remaining. For example, can one develop a more efficient data structure that has a smaller dependency on $\alpha_{\min}$ and $\eps$? Also, is there a subquadratic time algorithm for the static case?  If not, can we prove that the problem is 3sum-hard? Bae and Chwa~\cite{GR-06} proved that the shortest path map for a transportation metric can have $\Omega(n^2)$ size, which indicates that the problem may indeed be 3sum-hard.


\ifproceedings
  \newpage
  \appendix
  \section{Appendix: Proofs}

\noindent {\bf Property 3:}
 There exists an optimal path $\P'$ of total cost $wt(\P)$ with $\C_{\P'}=\C_{\P}$ that fulfills Properties~\ref{Property1}-\ref{Property2} such that for any two consecutive roads $c_i$ and $c_{i+1}$ in $\C_{\P'}$ the straight-line segment $(t_i,s_{i+1})$ of $\P'$ must have an endpoint at an endpoint of $c_i$ or $c_{i+1}$, respectively.
\begin{proof}
Assume the opposite, i.e., $(t_i,s_{i+1})$ does not coincide with
any of the endpoints of $c_i$ or $c_{i+1}$. Consider the three
segment path from $s_i$ to $t_{i+1}$, that is, $(s_i,t_i)$,
$(t_i,s_{i+1})$ and $(s_{i+1},t_{i+1})$. The length of this path
is:
$$\alpha_i\cdot |s_it_i|+ |t_is_{i+1}|+\alpha_{i+1}\cdot |s_{i+1}t_{i+1}|.$$
According to Property ~\ref{Property2} the orientation of
$(t_i,s_{i+1})$ is fixed, which implies that the weight of the
path is a linear function only depending on the position of $t_i$
(or $s_{i+1}$). Hence, moving $t_i$ in one direction will
monotonically increase the weight of the path until one of two
cases occur: (1) either $t_i$ or $s_{i+1}$ encounters an endpoint
of $c_i$ or $c_{i+1}$, or (2) $t_i=s_i$ or $s_{i+1}=t_{i+1}$. If
(1) then we have a contradiction since we assumed $(t_i,s_{i+1})$
did not coincide with any endpoint. And if (2) then we have a
contradiction since $\P$ must follow both $c_i$ and $c_{i+1}$
(again from the definition of $\C_{\P}$). \hfill $\square$
\end{proof}


\vspace{1cm}
\noindent {\bf Lemma 3:}
 Given a point set $\S_{\C}$ and a positive constant $\eps$ one can preprocess $\S_{\C}$ into a data structure of size $O(n/\eps)$ in $O(1/\eps \cdot n \log n)$ time such that given a query point $s$ the point set $\D_1$, of size at most $36\pi/\eps$, can be reported in $O(1/\eps \cdot \log n)$ time.
\begin{proof}
  Given a direction $d$ and a point $s$ let $\ell_d(s)$ denote the infinite ray originating at $s$ with direction~$d$, see Fig.~\ref{fig:Query1}b. Let $C(s,d,\theta)$ be the cone with apex at $s$, bisector $\ell_d(s)$ and angle $\theta$. It has been shown (see for example Section~4.1.2 in~\cite{ns-gsn-07} or Lemma~2 in~\cite{bgm-otg-04}) that $\S_{\C}$ can be preprocessed in $O(n \log n)$ time into a data structure of size $O(n)$ such that given a query point $s$ in the plane the data structure returns the point in $C(s,d,\theta)$ whose orthogonal projection onto $\ell_d(s)$ is closest to $s$ in $O(\log n)$ time. We have $36\pi/\eps$ directions, thus the lemma follows. \hfill $\square$
\end{proof}


\vspace{1cm}
\noindent {\bf Lemma 4:}
 Given a set of roads $\C$ and a positive constant $\eps$ one can preprocess $\C$ in $O(n \log n)$ time into a data structure of size $O(n)$ such that given a query point $s$ the set $\D_2$ can be reported in $O(\frac {1}{\alpha_{\min}\cdot\eps^2} \cdot \log n \log_{1+\eps} (\alpha_{\max}/\alpha_{\min}))$ time. The number of $3$-tuples in $\D_2$ is $O(\frac{1} {\alpha_{\min} \eps^2} \cdot \log_{1+\eps} \frac{\alpha_{\max}}{\alpha_{\min}})$.
\begin{proof}
The preprocessing consists of two steps: (1) partitioning $\C$ into the sets $\C_{i,j}$, $1\leq i \leq m$ and $1\leq j \leq b$, and (2) preprocessing each set $\C_{i,j}$ into a data structure that answers ray shooting queries efficiently.

The first part is easily done in $O(n \log n)$ time by sorting the
roads first with respect to their orientation and then with
respect to their weight.

The second step of the preprocessing can be done by building two trapezoidal maps $T_{up}(\C_{i,j})$ and $T_{down}(\C_{i,j})$ (also know as a vertical decomposition) of each set $\C_{i,j}$ as follows (see Fig.~\ref{fig:Query2}c). Rotate $\C_{i,j}$ such that $\gamma_{up}(i,j)$ is vertical and upward. Build a trapezoidal map $T_{up}(\C_{i,j})$ of $\C_{i,j}$ as described in Chapter~6.1 in~\cite{bcko-cgaa-08}. Then preprocess $T_{up}(\C_{i,j})$ to allow for planar point location. Note that every face in $T(\C)$ either is a triangle or a trapezoid, and the left and right edges of each face (if they exists) are vertical. The trapezoidal map $T_{down}(\C_{i,j})$ can be computed in the same way by rotating $\C_{i,j}$ such that $\gamma_{down}(i,j)$ is vertical and upward. The total time needed for this step is $O(n \log n)$ and it requires $O(n)$ space.

When a query point $s$ is given, two ray shooting queries are
performed for each set $\C_{i,j}$. However, instead we perform a
point location in the trapezoidal maps $T_{up}(\C_{i,j})$ and
$T_{down}(\C_{i,j})$. Consider $T_{up}(\C_{i,j})$ and let $f$ be
the face in the map containing $s$. The top edge of $f$
corresponds to the first road $c_{up}$ hit by a ray emanating from
$s$ in direction $\gamma_{up}(i,j)$. When $c_{up}$ is found we
just add to $\D_2$ the first vertex on $c_{up}$ in $G$ to the
right of $s$. The same process is repeated for
$T_{down}(\C_{i,j})$. Performing the point location requires
$O(\log n)$ time per trapezoidal map, thus the total query time is
$O(mb \log n) = O(\frac {1}{\alpha_{\min}\cdot\eps^2} \cdot \log n
\log_{1+\eps} (\alpha_{\max}/\alpha_{\min}))$. \hfill $\square$
\end{proof}


\vspace{1cm}
\noindent {\bf Theorem 4:}
 A query can be answered in time $O(\frac{1}{\alpha_{\min}\eps^2} \cdot \log n \log_{1+\eps} (\frac{\alpha_{\max}}{\alpha_{\min}}))$.
\begin{proof}
 As above we divide the analysis into two parts: type 1 and type 2.
 \begin{description}
  \item[] Type 1: According to Lemma~\ref{lem:type1_set} the number of type 1 candidate points is at most $32\pi/\eps$ and can be computed in time $O(1/\eps\cdot \log n)$. Computing the cost of a quickest path for a point in $\D_1$ can be done in constant time. Thus, the query time is $O(1/\eps \log n)$.
  \item[] Type 2: According to Lemma~\ref{lem:type2_set} the number of $3$-tuples in $\D_2$ is $O(\frac{1}{\alpha_{\min}\eps^2} \cdot \log_{1+\eps} \alpha_{\max}/\alpha_{\min})$ and can be computed in time $O(\frac{1}{\alpha_{\min}\eps^2} \cdot \log n \log_{1+\eps} \alpha_{\max}/\alpha_{\min})$. Each element in $\D_2$ is then processed in $O(1)$ time, thus $O(\frac{1}{\alpha_{\min}\eps^2} \cdot \log n \log_{1+\eps} \alpha_{\max}/\alpha_{\min})$ in total.
 \end{description}
 Summing up we get the bound stated in the theorem.
\hfill $\square$ \end{proof}


\vspace{1cm}
\noindent {\bf Lemma 6:}
 Let $\eps<1$ be positive constants. If $s(\ell)\cap \S_{\C} = \emptyset$ then:
    $$wt(\P) \leq wt(\P_2) \leq (1+\eps) \cdot wt(\P).$$
\begin{proof}
As above let $(s,s_1)$ be the first segment of $\P$, where $s_1$
lies on a road $c_1$. Assume w.l.o.g. that $c_1$ belongs to the
set of roads $\C_{ij}$ as defined in
Section~\ref{ssec:constructing_candidates}(Type 2). Rotate the
input such that $c_1$ is horizontal, below $s$ and going from
right to left. Consider the construction of the $3$-tuples in Type
2, and let $[c'_1,s'_1,p]$ be the $3$-tuple reported when
$\C_{ij}$ was processed in the direction towards $c_1$. See
Fig.~\ref{fig:apx_bound2a}(b).

Consider the shortest path from $s$ to $t$ using the segment
$(s,s'_1)$. We will have two cases depending on $s'_1$: either (1)
$s_1$ and $s'_1$ lie on the same road, or (2) they lie on
different roads.

Case 1: If $s'_1$ and $s_1$ both lie on road $c_1$ (with cost
function $\alpha_1$) then we have:
\begin{eqnarray*}
 wt(\P_2) & \leq & |ss'_1| + \alpha_1 \cdot |s'_1t_1| + \delta_G(t_1,t) \\
          & \leq & |ss'_1| + \alpha_1 \cdot |s'_1s_1| + \alpha_1 \cdot |s_1t_1| + \delta_G(t_1,t) \\
          & \leq & |ss'_1| + \alpha_1 \cdot |s'_1s_1| + \delta_G(s_1,t) \\
          & \leq & (1+4\kappa)\cdot |ss_1| + 5\kappa \cdot |ss_1| + \delta_G(s_1,t) \quad \mathrm{(since~}\theta \leq \kappa) \\
          & \leq & (1+9\kappa)\cdot |ss_1| + \delta_G(s_1,t) \\
          & \leq & (1+9\kappa)\cdot \delta_G(s,t) \\
          & < & (1+\eps) \cdot \delta_G(s,t)
\end{eqnarray*}
That completes the first case.

Case 2: If $s'_1$ and $s_1$ lie on different roads $c'_1$ and
$c_1$, respectively, then $c'_1$ must lie between $s$ and $c_1$.
This follows from the fact that $s(\ell)$ does not contain any
endpoints and $c'_1$ is the first road hit. Furthermore, there
must exist an edge $(q',q)\in E$ such that $q'$ lies on $c'_1$ to
the left of $s'_1$ and $q$ lies on $c_1$ to the right of $t_1$ and
$\angle (u_1,q,q')=\arccos \alpha_1$.  See
Fig.~\ref{fig:apx_bound2a}(b) for an illustration of case 2(a).
This is easily argued by a proof of contradiction and we leave it
for the full version of the paper.

Consider the situation as depicted in Fig.~\ref{fig:apx_bound2a}(b). We will prove that the cost of the path from $s$ to $t_1$ via $s'_1, q'$ and $q$ is almost the same as the cost of the optimal path from $s$ to $t_1$ (that goes via $s_1)$. Recall that the cost function of $c_1$ and is $\alpha_1$ and the cost function of $c'_1$ is $\alpha'_1$. Furthermore, let $r$ be the intersection point between $c'_1$ and $(s,s_1)$.

Note that the cost of the path from $r$ to $q$ via $q'$ is maximized if $(r,q')$ forms an angle of $\theta$ with the horizontal line and $q'$ lies above $r$, as shown in Fig.~\ref{fig:apx_bound2a}(b). Furthermore, $|qq'|=|q'p|+|pq|$ and $|pq|=|rs_1|$.

\begin{eqnarray*}
 \alpha'_1 \cdot |rq'| + |q'p|  & \leq & \alpha'_1 \cdot |pr| \cdot \cos \theta + |pr| \cdot \sin \theta \\
         & < & |pr| (\alpha'_1 + \eps\cdot \alpha_{\min}/18) \\
         & < & \alpha'_1 \cdot |pr| (1 + \eps)
\end{eqnarray*}
Putting together the bounds we get:

\begin{eqnarray*}
 wt(\P_2) & \leq & |ss'_1| + \alpha'_1 \cdot |s'_1q'| + |q'q| + \delta_G(q,t) \\
          & \leq & |ss'_1| + \alpha'_1 \cdot |s'_1r| + \alpha'_1 \cdot |rq'| + (|q'p|+|pq|) +  \delta_G(q,t) \\
          & \leq & (1+9\kappa) \cdot |sr| + \alpha'_1 \cdot |rq'| + |q'p| +|rs_1| +  \delta_G(q,t) \quad \mathrm{(see~Part~1)} \\
          & \leq & (1+9\kappa) \cdot |sr| + |rs_1| + \alpha'_1 \cdot |pr| (1 + \eps) +  \delta_G(q,t) \\
             & < & (1+\eps) \cdot (|sr| + |rs_1| + \alpha'_1 \cdot |s_1q|) +  \delta_G(q,t)\\
             & < & (1+\eps) \cdot wt(\P)
\end{eqnarray*}
This completes the proof of Lemma~\ref{lem:apx_bound2}.
\hfill $\square$ \end{proof}

 \begin{figure} [h]
   \begin{center}
      \includegraphics[width=14cm]{apx_bound2a.pdf}
      \caption{Illustrating (a) Case~1 and Case~2 (b) in the proof of Lemma~\ref{lem:apx_bound2}.} \label{fig:apx_bound2a}
   \end{center}
 \end{figure}


\section{Appendix: WSPD}

\begin{definition} [\cite{ck-dmpsa-95}]   \label{def:wellsep}
       Let $s>0$ be a real number, and let $A$ and $B$ be two finite
       sets of points in $\Reals^d$. We say that $A$ and $B$ are
       \emph{well-separated} with respect to $s$, if there are two
       disjoint $d$-dimensional balls $C_A$ and $C_B$, having the
       same radius, such that
       (i) $C_A$ contains the bounding box $R(A)$ of $A$,
       (i) $C_B$ contains the bounding box $R(B)$ of $B$, and
       (ii) the minimum distance between $C_A$ and $C_B$ is at least
       $s$ times the radius of $C_A$.
\end{definition}

The parameter $s$ will be referred to as the {\em
separation constant}. The next lemma follows easily from
Definition~\ref{def:wellsep}.

\begin{lemma} [\cite{ck-dmpsa-95}]  \label{lem:insamepair}
       Let $A$ and $B$ be two finite sets of points that are
       well-separated w.r.t.\ $s$, let $x$ and $p$ be points of $A$,
       and let $y$ and $q$ be points of $B$. Then
       (i) $|xy| \leq (1+4/s) \cdot |pq|$, and
       (ii) $|px| \leq (2/s) \cdot |pq|$.
\end{lemma}

\begin{definition}[\cite{ck-dmpsa-95}]  \label{def:WSPD}
       Let $S$ be a set of $n$ points in $\Reals^d$, and let $s>0$
       be a real number. A {\em well-separated pair decomposition}
       (WSPD) for $S$ with respect to $s$ is a sequence of pairs
       of non-empty subsets of~$S$,
       $(A_1,B_1) ,  \ldots , (A_m, B_m)$,
       such that
       \begin{enumerate}
       \item $A_i \cap B_i = \emptyset$, for all $i=1, \ldots, m$,
       \item for any two distinct points $p$ and $q$ of
             $S$, there is exactly one pair $(A_i,B_i)$
             in the sequence, such that
             (i) $p \in A_i$ and $q \in B_i$,
             or (ii) $q \in A_i$ and $p \in B_i$,
       \item $A_i$ and $B_i$ are well-separated w.r.t.\ $s$,
             for $1\leq i \leq m$.
       \end{enumerate}
       The integer $m$ is called the {\em size} of the WSPD.
\end{definition}

Callahan and Kosaraju showed that a WSPD of size $m = \O(s^dn)$ can be computed in $\O(s^dn+n \log n)$ time.
\fi


\begin{thebibliography}{99}
\bibliographystyle{plain}

\bibitem{ahiklmps-pptmi-01}
M. Abellanas, F. Hurtado, C. Icking, R. Klein, E. Langetepe, L.
Ma, B. Palop del R\'{\i}o and V. S\'{a}cristan.
\newblock Proximity problems for time metrics induced by the L1
metric and isothetic networks.
\newblock In Actas de los IX Encuentros de Geometr´ýa Computacional,
pages 175–182, Girona, 2001.

\bibitem{ahiklmprs-vdsnh-03}
M. Abellanas, F. Hurtado, C. Icking, R. Klein, E. Langetepe, L.
Ma, B. Palop del R\'{\i}o and V. S\'{a}cristan.
\newblock Voronoi diagram for services neighboring a highway.
\newblock Information Processessing Letters, 86:283–-288, 2003.


\bibitem{as-anspt-93}
  P. K. Agarwal and M. Sharir.
  \newblock Applications of a new space partitioning technique.
  \newblock Discrete \& Computational Geometry, 9:11--38, 1993.

\bibitem{aap-qpssc-04}
O. Aichholzer, F. Aurenhammer and B. Palop del R\'{\i}o.
\newblock Quickest paths, straight skeletons, and the city Voronoi diagram.
\newblock Discrete \& Computatational Geometry, 31:17–-35, 2004.

\bibitem{bkc-occvd-09}
S. W. Bae, J.-H. Kim and K.-Y. Chwa.
\newblock Optimal construction of the city Voronoi diagram.
\newblock International Journal on Computational Geometry \&
Applications, 19(2):95--117, 2009.

\bibitem{GR-06}
S. W. Bae and K.-Y. Chwa.
\newblock Voronoi Diagrams for a Transportation Network on the
Euclidean Plane.
\newblock International Journal on Computational Geometry \&
Applications, 16:117--144, 2006.

\bibitem{SPGR-05}
S. W. Bae and K.-Y. Chwa.
\newblock Shortest Paths and Voronoi Diagrams with Transportation
Networks Under General Distances.
\newblock ISAAC, pages 1007--1018, 2005.

\bibitem{bcko-cgaa-08}
M. de Berg, O. Cheong, M. van Kreveld and M. Overmars.
\newblock Computational Geometry: Algorithms and Applications. (3rd edition).
\newblock Springer-Verlag, Heidelberg, 2008.

\bibitem{b-mddc-78}
J. L. Bentley.
\newblock Multidimensional divide-and-conquer.
\newblock Communications of the ACM, 23(5):214--228, 1978.

\bibitem{bgm-otg-04}
P. Bose, J. Gudmundsson and P. Morin.
\newblock Ordered theta graphs.
\newblock Computational geometry -- Theory \& Applications,
28(1):11--18, 2004.

\bibitem{ck-dmpsa-95}
P.~B. Callahan and S.~R. Kosaraju.
\newblock A decomposition of multidimensional point sets with applications to  $k$-nearest-neighbors and $n$-body potential fields.
\newblock {\em Journal of the {ACM}}, 42:67--90, 1995.

\bibitem{cj-arsis-92}
 S. W. Cheng and R. Janardan.
 \newblock Algorithms for ray-shooting and intersection searching.
 \newblock Journal of Algorithms 13:670--692, 1992.


\bibitem{d-ntpcg-59}
E.~W. Dijkstra.
\newblock A note on two problems in connexion with graphs.
\newblock {\em Numerische Mathematik}, 1:269–-271, 1959.

\bibitem{gsw-ccvdf-08}
R. G\"{o}rke, C.-C. Shin and A. Wolff.
\newblock Constructing the City Voronoi Diagram Faster.
\newblock International Journal of Computational Geometry and
Applications 18(4):275--294, 2008.

\bibitem{hs-parss-95}
  J. Hershberger and S. Suri.
  \newblock A pedestrian approach to ray shooting: Shoot a ray, take a walk.
  \newblock Journal of Algorithms 18:403--431, 1995.

\bibitem{j-easps-77}
  D.~B.~Johnson
  \newblock Efficient algorithms for shortest paths in sparse networks.
  \newblock Journal of the ACM, 24(1):1-–13, 1977.

\bibitem{ns-gsn-07}
G.~Narasimhan and M.~Smid
\newblock Geometric Spanner Networks.
\newblock Cambridge University Press, 2007.

\end{thebibliography}
\end{document}